\documentclass[10pt]{article}
\usepackage{color, amssymb, amsthm, amsmath, amsfonts, ascmac, comment, enumerate}
\usepackage[dvipdfmx]{hyperref}
\usepackage{graphicx}
\usepackage{multirow}
\usepackage{tcolorbox}
\usepackage{xcolor}
\hypersetup{
  colorlinks,
  citecolor=blue!20!black!30!green,
  linkcolor=red,
  urlcolor=blue}
\newtheorem{Thm}{Theorem}
\newtheorem*{Thm*}{Theorem}
\newtheorem{Prop}{Proposition}	
\newtheorem{Lem}{Lemma}

\newtheorem*{Thm1}{\rm\bf Theorem~\ref{Thm_sym_query}}
\newtheorem*{Thm2}{\rm\bf Theorem~\ref{Thm_tight_sym_1}}
\newtheorem*{Thm3}{\rm\bf Theorem~\ref{Thm_tight_sym_2}}
\newtheorem*{Prop1}{\rm\bf Proposition~\ref{Prop_lower_private}}

\newcommand{\QCC}{\mathrm{QCC}}
\newcommand{\CCP}{\mathrm{CC}}
\newcommand{\CCS}{\mathrm{CC}^\textsf{pub}}
\newcommand{\QCCEN}{\mathrm{QCC}^\ast}
\newcommand{\QCCEX}{\QCC_\mathrm{E}}

\newcommand{\Bset}{\{0, 1\}}
\usepackage[top=30truemm,bottom=30truemm,left=20truemm,right=20truemm]{geometry}

\title{Matching upper bounds on symmetric predicates in quantum communication complexity }
\author{Daiki Suruga \thanks{Graduate School of Mathematics, Nagoya University}}
\begin{document}

\maketitle
\begin{abstract}
In this paper, we focus on the quantum communication complexity of functions of the form $f \circ G = f(G(X_1, Y_1), \ldots, G(X_n, Y_n))$
where $f: \Bset^n \to \Bset$ is a symmetric function, $G: \Bset^j \times \Bset^k \to \Bset$ is any function 
and Alice (resp. Bob)  is given $(X_i)_{i \leq n}$ (resp. $(Y_i)_{i \leq n}$).
Recently, Chakraborty et al. [STACS 2022] showed that the quantum communication complexity of $f \circ G$ is $O(Q(f)\QCCEX(G))$
when the parties are allowed to use shared entanglement, 
where $Q(f)$ is the query complexity of $f$ and $\QCCEX(G)$ is the exact communication complexity of $G$.
In this paper, we first show that the same statement holds \emph{without shared entanglement}, which generalizes their result.
Based on the improved result, we next show tight upper bounds on $f \circ \mathrm{AND}_2$
for any symmetric function $f$ (where $\textrm{AND}_2 : \Bset \times \Bset \to \Bset$ denotes the 2-bit AND function)
in both models: with shared entanglement and without shared entanglement.
This matches the well-known lower bound by Razborov~[Izv. Math. 67(1) 145, 2003] when shared entanglement is allowed
and improves Razborov's bound when shared entanglement is not allowed.
\end{abstract}

\section{Introduction}
\subsection{Motivation}
\paragraph{Communication complexity}
The model of (classical) communication complexity was originally introduced by Yao~\cite{Yao79}.
In this model, there are two players, Alice who receives $x \in \mathcal{X}$ and Bob who receives $y \in \mathcal{Y}$.
Their goal is to compute a known function $f: \mathcal{X} \times \mathcal{Y} \to \Bset$ with as little communication as possible.
Due to this simple structure, lower and upper bounds on communication complexity problems have applications on many other fields such as 
VLSI design, circuit complexity, data structure, etc. (See~\cite{KN96, RY20} for good references.)
Communication complexity has been investigated in many prior works
since its introduction.
\par
In communication complexity, Set-Disjointness ($\textsf{DISJ}_n (x, y) = \neg \bigvee_{i \leq n} (x_i \wedge y_i)$),
Equality ($\textsf{EQ}_n(x, y) = \neg \bigwedge_{i \leq n}(x_i \oplus y_i)$), and Inner-Product function 
($\textsf{IP}_n(x, y) = \bigoplus_{i \leq n}(x_i \wedge y_i)$)
are three of the most well-studied functions.
Denoting the private randomized communication complexity of a function $f$ (with error $ \leq 1/3$) as $\CCP(f)$,
it has been shown that $\CCP(\textsf{DISJ}_n) = \CCP(\textsf{IP}_n) = \Theta(n)$ and $\CCP(\textsf{EQ}_n) = \Theta(\log n)$ hold.
Note that if shared randomness between the two parties is allowed, 
$\CCS(\textsf{DISJ}_n) = \CCS(\textsf{IP}_n) = \Theta(n)$ and $\CCS(\textsf{EQ}_n) = \Theta(1)$ hold
where $\CCS(f)$ denotes the randomized communication complexity of a function $f$ with error $\leq 1/3$ and with shared randomness.
Observing from $\CCP(\textsf{EQ}_n) \neq \CCS(\textsf{EQ}_n)$, 
we see that the shared randomness sometimes enables to reduce the communication complexity.
Therefore, we need to carefully treat the effect of the shared randomness 
when analyzing the communication complexity of functions.
(Note that if $\CCS(f)$ is strictly larger than $O(\log n)$, Newman's theorem~\cite{New91} tells us that $\CCS(f) = O(\CCP(f))$ holds.)
\par
In 1993, Yao~\cite{Yao93} introduced the model of \emph{quantum} communication complexity
 based on the model of classical communication complexity.
The main difference between the classical and quantum model is that Alice and Bob use quantum bits to transmit their information in the quantum model.
As quantum information science has been growing up rapidly, quantum communication complexity
has been widely studied~\cite{BC97, BCvD01, Bra03, BCMdW10}.
In the case of quantum communication complexity,
the three functions mentioned above satisfy
$\QCC(\textsf{DISJ}_n) = \Theta(\sqrt{n})$~\cite{Raz03, AA05}, $\QCC(\textsf{IP}_n) = \Theta(n)$~\cite{CvDNT98} and $\QCC(\textsf{EQ}_n) = \Theta(\log n)$~\cite{BdW01} 
, where $\QCC(f)$ denotes the private quantum communication complexity of a function $f$.
If Alice and Bob have shared entanglement,
$\QCCEN(\textsf{DISJ}_n) = \Theta(\sqrt{n})$~\cite{Raz03, AA05}, $\QCCEN(\textsf{IP}_n) = \Theta(n)$~\cite{CvDNT98} and $\QCCEN(\textsf{EQ}_n) = \Theta(1)$~\cite{BdW01} hold
where $\QCCEN(f)$ denotes the quantum communication complexity of the function $f$ when shared entanglement is allowed.
Even though the power of entanglement is not significant in these examples, 
careful treatment of shared entanglement is important since many non-trivial properties of entanglement have been witnessed (e.g., \cite{CHSH69, Mer90, GKRdW06, Gav08, BCMdW10}), 
including Ref.~\cite{Gav08} that shows Newman's theorem~\cite{New91} does not hold in case of shared entanglement.
\par
\paragraph{Composed functions}
In both classical and quantum communication complexity, many important functions have the form 
\[f \circ G: (X, Y) \mapsto f((G(X_1, Y_1)), \ldots, G(X_n, Y_n)) \in \Bset\]
where $X= (X_i)_{i \leq n} \in \Bset^{nj}$, $Y= (Y_i)_{i \leq n} \in \Bset^{nk}$, $f:\Bset^n \to \Bset$ and $G: \Bset^j \times \Bset^k \to \Bset$.
This fact is already observed in the three of the most well-studied functions: Set-Disjointness ($\neg \textsf{OR}_n \circ \textsf{AND}_2$),
Equality ($\textsf{AND}_n \circ \textsf{XOR}_2$), and Inner-Product function ($\textsf{XOR}_n \circ \textsf{AND}_2$).
As a natural consequence of its importance, functions of this form have been investigated deeply~\cite{SZ09a, LZ10, ACFN12} in both classical and quantum communication complexity.
Even though the functions $f \circ G$ are in general difficult to analyze in detail because of their generality, 
the analysis may become simpler when $G$ has a simpler form.
Let us explain in detail about upper and lower bounds on the quantum communication complexity when $G$ is a simple function such as $\textsf{AND}_2$, $\textsf{XOR}_2$.
In the case of upper bounds,  Buhrman et al.~\cite{BCW98} showed $\QCC(f \circ G) = O(\textrm{Q}(f) \log n)$ holds when $G \in \{\textsf{AND}_2$, $\textsf{XOR}_2\}$,
where $\textrm{Q}(f)$ denotes the bounded error query complexity of a function $f$.
Applying this result, we immediately get $\QCC(\textsf{DISJ}_n) = O(\sqrt{n} \log n)$ because $Q(\textsf{OR}_n) = O(\sqrt{n})$ holds by Grover's algorithm.
This is an important result since it shows that the fundamental function $\textsf{DISJ}_n$ can be computed more efficiently than in classical scenario (recall $\CCS(\textsf{DISJ}_n) = \Theta(n)$).
This upper bound $\QCC(\textsf{DISJ}_n) = O(\sqrt{n} \log n)$ was later improved by~\cite{HdW02} and finally improved to $O(\sqrt{n})$ by~\cite{AA05}.
Ref.~\cite{BCW98} gives many important upper bounds for functions $f \circ G$.
On the other hand, Razborov~\cite{Raz03} treated lower bounds of $\QCCEN(f \circ G)$ and showed several tight bounds 
when $f$ is a symmetric function and $G$ is $\textsf{AND}_2$. 
For example, Ref.~\cite{Raz03} shows $\QCCEN(\textsf{DISJ}_n) = \Omega(\sqrt{n})$ and $\QCCEN(\textsf{IP}_n) = \Omega({n})$.
Combining the $O(\sqrt{n})$ bound~\cite{AA05} and $\Omega(\sqrt{n})$ bound~\cite{Raz03} imply $\QCC(\textsf{DISJ}_n) = \Theta(\sqrt{n})$.
Our contributions can be understood as a generalization of these works~\cite{BCW98, Raz03, AA05}.
\par
As described above, the relation $\QCC(f \circ G) = O(\textrm{Q}(f) \log n)$ holds when the function $G$ is either $\textsf{AND}_2$ or $\textsf{XOR}_2$~\cite{BCW98},
and this upper bound was then improved to $O(\sqrt{n})$ by Aaronson and Ambainis~\cite{AA05} when $f = \textsf{OR}_n$.
This implies that the $\log n$ factor in~\cite{BCW98} is not required in the case of Set-Disjointness function.
Considering this fact, one may wonder whether the $\log n$ overhead is not required for arbitrary function when $G \in \{\textsf{AND}_2, \textsf{XOR}_2\}$.
Chakraborty et al.~\cite{CCMP20} treated this problem and gave a negative answer.
They exhibited a function $f$ that requires $\Omega(Q(f)\log n)$ communication to compute $f \circ \textsf{XOR}_2$.
This means that the upper bound $O(Q(f) \log n)$ in~\cite{BCW98} is tight for generic functions.
Interestingly, their subsequent work~\cite{CCH+22} generalized the result and proved the $\log n$ overhead is not required when $f$ is a symmetric function.
In this paper, we focus on functions of the form $\textsf{SYM} \circ G$ where $\textsf{SYM}$ is a symmetric function.
As described below in Section~\ref{subsec_first_result} and Section~\ref{subsec_second_result}, our first result generalizes the paper~\cite{CCH+22} 
and our second result shows a tight lower and upper bound on the quantum communication complexity of such functions $\textsf{SYM} \circ G$ when $G = \textsf{AND}_2$.
\par

\subsection{First result: On improving the result~\cite{CCH+22}}\label{subsec_first_result}
As mentioned above, the paper~\cite{CCH+22} showed that the $\log n$ factor in $O(Q(f) \log n)$ upper bound
is not required when we focus on a symmetric function $f = \textsf{SYM}$.
More precisely, it is shown in Ref.~\cite{CCH+22} that there exists a protocol for a function $\textsf{SYM} \circ G$ with $O(Q(\textsf{SYM})\QCCEX(G))$ qubits of communication ($\QCCEX(G)$ denotes the exact communication complexity of $G$) which uses \emph{shared entanglement}.
Even though the amount of shared entanglement in their protocol is not so large, 
there are cases when the amount of the entanglement is significantly larger than the communication cost $O(Q(\textsf{SYM})\QCCEX(G))$ as stated in~\cite[Remark 4]{CCH+22}.
Thus, in general the shared entanglement  can not be included as a part of the communication in their protocol.
We improve their result and show that the same statement holds even without any shared entanglement.
That is, we show the following theorem.
\begin{Thm}\label{Thm_sym_query}
For any symmetric function $f: \Bset^n \to \Bset$ and any two-party function $G: \Bset^j \times \Bset^k \to \Bset$,
\begin{equation*}
\QCC(f \circ G) \in O(Q(f)\QCCEX(G)).
\end{equation*}
\end{Thm}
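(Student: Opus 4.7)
The plan is to design a protocol that avoids the shared entanglement used by CCH+22 while achieving the same communication bound. The strategy is to start from an optimal $Q(f)$-query quantum algorithm $\mathcal{A}$ for the symmetric function $f$ and to simulate each oracle call at cost $O(\QCCEX(G))$, with the key obstacle being that naive simulations incur an additional $\log n$ overhead per query for address bookkeeping.

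The first step is for Alice to set up a shared ``address'' state: she locally prepares the maximally entangled state $|\Phi\rangle = \frac{1}{\sqrt{n}}\sum_i |i\rangle_A |i\rangle_B$ using Hadamards and CNOTs on two $\log n$-qubit registers, then sends Bob's half to him using $O(\log n)$ qubits of communication. This replaces, at a one-time cost, the shared entanglement that CCH+22 assume for free. Given this shared address, each oracle call $O_G: |i\rangle|b\rangle \mapsto |i\rangle|b \oplus G(X_i, Y_i)\rangle$ is implemented via compute--uncompute: Alice coherently looks up $X_i$ from her half into a workspace, Bob looks up $Y_i$ from his, they run the exact protocol $P_G$ for $G$ on $(X_i, Y_i)$ (cost $\QCCEX(G)$), XOR the output into the answer register, reverse $P_G$, and uncompute the lookups. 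This restores the shared address register, so it is reusable across all $Q(f)$ queries.

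The central technical challenge is handling the intermediate unitaries that $\mathcal{A}$ applies to its address register between queries (in particular, the Hadamards and reflections characteristic of amplitude estimation or quantum counting for symmetric functions): applied only on Alice's side, these would decohere the shared address and break the computational-basis alignment needed by the next simulated query. My plan is to exploit the structure of symmetric-function algorithms so that these intermediate operations can be handled without asymptotic additional communication---either by having Bob mirror Alice's operations on his half of the shared register (using the fact that the relevant unitaries are real orthogonal, so that a suitable bipartite invariant is preserved), or by adopting a quantum-walk framework in the spirit of Aaronson--Ambainis, which sidesteps explicit address-register manipulation. Verifying that this can be done while preserving the invariants required for subsequent queries is the main content of the proof.

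Collecting costs, the total communication is $O(\log n) + O(Q(f)\QCCEX(G))$. For any non-constant symmetric $f: \Bset^n \to \Bset$, standard polynomial-method bounds give $Q(f) = \Omega(\sqrt{n})$, and $\QCCEX(G) \geq 1$ for non-constant $G$, so $Q(f)\QCCEX(G) = \Omega(\sqrt{n}) = \Omega(\log n)$ for $n$ large, absorbing the setup cost. The trivial case of constant $f$ or $G$ requires $0$ communication. Thus $\QCC(f \circ G) \in O(Q(f)\QCCEX(G))$, as claimed.
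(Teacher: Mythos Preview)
Your proposal has a genuine gap at exactly the step you flag as ``the main content of the proof,'' and neither of the two routes you sketch actually closes it. The mirroring idea fails for a structural reason: in any amplitude-estimation or counting algorithm for a symmetric $f$, the intermediate unitaries $U_t$ act jointly on the address register and on phase/workspace registers that Bob does not hold, so there is nothing for Bob to mirror. Even if one restricts attention to a real orthogonal $U$ acting on the address register alone (e.g.\ the Grover diffusion operator), applying $U\otimes U$ preserves the unweighted maximally entangled state $\sum_i |i\rangle|i\rangle$ but \emph{not} the input-dependent weighted state $\sum_i \alpha_i(-1)^{G(X_i,Y_i)}|i\rangle|i\rangle$ that arises after a query; the correction Bob would need depends on the amplitudes, which depend on the oracle answers. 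So after a single iteration the $A$--$B$ alignment is destroyed and the next query cannot be implemented with $O(\QCCEX(G))$ communication. Your second option, ``adopt a quantum-walk framework in the spirit of Aaronson--Ambainis,'' is on the right track but is not a simulation of $\mathcal{A}$ at all; it is a different algorithm, and writing it down is precisely the work that has to be done.

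The paper's proof takes that second route explicitly and does not attempt a query-by-query simulation. It first constructs an entanglement-free version of the subroutine $\textsf{FIND-MORE}_k$ (find some $i$ with $G(X_i,Y_i)=1$, promised at least $k$ exist, in $O(\sqrt{n/k}\,\QCCEX(G))$ communication) by generalizing the Aaronson--Ambainis disjointness protocol with $O_G$ in place of $O_{\mathrm{AND}}$; the quantum-walk structure is what kills the $\log n$ factor here, and no shared EPR pairs are used. It then invokes the black-box reduction of CCH+22 (their Theorems~22 and~25), which converts any $\textsf{FIND-MORE}_k$ primitive using $m$ EPR pairs into an $O(Q(f)\QCCEX(G))$ protocol for $f\circ G$ using $m\cdot O(l_0+l_1)$ EPR pairs plus $O(\log n)$ shared random bits. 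Setting $m=0$ and absorbing the $O(\log n)$ randomness into communication (exactly as in your final paragraph, via $Q(f)=\Omega(\sqrt{n})$) finishes the proof.
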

\paragraph{Proof technique}
In the paper~\cite{CCH+22}, the desired protocol is constructed by employing a new technique called \emph{noisy amplitude amplification}, 
which needs a certain amount of entanglement shared between Alice and Bob.
Based on the noisy amplitude amplification technique, Ref.~\cite{CCH+22} shows the following theorem.
\begin{Thm*}[{\cite[Theorem~21]{CCH+22}}]
Suppose Alice (resp. Bob)  is given $(X_i)_{i \leq n} \in \{0, 1\}^{jn}$ (resp. $(Y_i)_{i \leq n}\in \{0, 1\}^{kn}$).
There is a protocol which satisfies the followings:
\begin{itemize}
\item The protocol uses $O(\sqrt{n} \QCCEX(G))$ qubits of communication and $\lceil \log n\rceil$ EPR pairs.
\item The protocol finds the coordinate $i$ satisfying $G(X_i, Y_i) = 1$ with probability $99/100$ when such $i$ exists, 
and outputs ``No'' with probability $1$ when no such $i$ exists.
\end{itemize}
\end{Thm*}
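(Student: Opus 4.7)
The plan is to implement a distributed Grover-type search in which the $\lceil\log n\rceil$ pre-shared EPR pairs act as a \emph{catalytic shared index} between Alice and Bob. First I would use those EPR pairs to instantiate the maximally entangled state
\[
|\Phi\rangle \;=\; \frac{1}{\sqrt{n}}\sum_{i=1}^{n}|i\rangle_{A}|i\rangle_{B},
\]
so that both parties hold locally accessible copies of the same index $i$ in superposition, at zero communication cost. With this shared index in place, I would implement the phase oracle $|i\rangle_{A}|i\rangle_{B}\mapsto (-1)^{G(X_{i},Y_{i})}|i\rangle_{A}|i\rangle_{B}$ by running the exact protocol for $G$ coherently on the conditioned inputs $(X_{i},Y_{i})$: each party feeds in the relevant bits using its own local copy of $|i\rangle$ together with its private input, then a local $Z$ on the answer qubit kicks back the desired phase, and the inverse $G$-protocol uncomputes the ancilla. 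Each oracle call thus costs $O(\QCCEX(G))$ qubits of communication and spends no further entanglement.

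Second, I would plug this phase oracle into an amplitude amplification routine starting from $|\Phi\rangle$. Writing $t$ for the (a priori unknown) number of marked indices, $O(\sqrt{n/t})$ AA rounds amplify the success probability above $99/100$, and the standard doubling strategy deals with unknown $t$ in $O(\sqrt{n})$ total rounds with only a constant-factor overhead. Each round consists of one phase-oracle call together with one reflection about $|\Phi\rangle$, so if the reflection can be done within $O(\QCCEX(G))$ communication and without new EPR pairs, the total is $O(\sqrt{n}\,\QCCEX(G))$ as claimed. After AA, either party measures its index register to obtain a candidate $i^{\star}$; Alice and Bob then verify $i^{\star}$ by one additional exact call to the $G$-protocol on $(X_{i^{\star}},Y_{i^{\star}})$ at cost $\QCCEX(G)$, outputting $i^{\star}$ if and only if this call returns $1$, and ``No'' otherwise. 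Since the exact protocol cannot err, the verifier never spuriously accepts: when no marked $i$ exists the protocol outputs ``No'' with probability $1$, and when some marked $i$ exists AA returns such an $i^{\star}$ with probability at least $99/100$ and verification certifies it.

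The hard part is justifying the per-round cost of the reflection about $|\Phi\rangle$. Reflecting about an entangled state naively needs nonlocal gates across the Alice--Bob cut, which would add $\Omega(\log n)$ communication per round and ruin the bound precisely when $\QCCEX(G)\ll\log n$; alternatively, re-preparing $|\Phi\rangle$ from fresh entanglement each iteration would blow the EPR-pair budget from $\lceil\log n\rceil$ up to $\Omega(\sqrt{n}\log n)$. This is exactly where the noisy amplitude amplification technique of \cite{CCH+22} must be invoked: the reflection is realized by pairing each coherent $G$-call with its inverse so that the shared entangled state is recycled catalytically across all $O(\sqrt{n})$ rounds, while the small residual imperfection is absorbed by the noise tolerance built into the amplitude-amplification analysis. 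Verifying that this recycling indeed keeps the EPR-pair cost at $\lceil\log n\rceil$ and the per-round communication at $O(\QCCEX(G))$, and that the accumulated error over $O(\sqrt{n})$ rounds still leaves success probability above $99/100$, is the main technical content.
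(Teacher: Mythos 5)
There is a genuine gap, and it sits exactly where you flagged it. Your plan reduces everything to performing, in each of the $O(\sqrt{n})$ rounds, a reflection about the shared entangled index state $|\Phi\rangle=\frac{1}{\sqrt n}\sum_i|i\rangle_A|i\rangle_B$ using only the original $\lceil\log n\rceil$ EPR pairs and $O(\QCCEX(G))$ communication per round; but you do not actually give an argument for this step. The sentence ``the reflection is realized by pairing each coherent $G$-call with its inverse so that the shared entangled state is recycled catalytically'' is not a construction: uncomputing the $G$-ancilla only restores the oracle workspace, it does not implement the nonlocal operator $2|\Phi\rangle\langle\Phi|-I$, which cannot be done by local unitaries on the two halves of the index register. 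You then invoke ``the noisy amplitude amplification technique of \cite{CCH+22}'' to close the hole, but that technique is precisely the content of the theorem you are asked to prove (it is the main technical ingredient of \cite[Theorem~21]{CCH+22}), so as written the argument is circular rather than merely incomplete. The error-accumulation claim over $O(\sqrt n)$ rounds is likewise asserted, not shown. (Minor additional point: with unknown $t$ you need the exact-verification step inside the doubling loop, which you do include, so the zero-error ``No'' side is fine.)

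For comparison, note that the paper itself never proves this statement---it quotes it from \cite{CCH+22}---and its own machinery deliberately avoids your hard step. In Lemma~2 (\textsf{FIND-ONE}) it takes the Aaronson--Ambainis Set-Disjointness protocol \cite[Section~7]{AA05}, which keeps all Grover reflections local to one party, and simply replaces the distributed AND-query by a coherent run of the exact protocol for $G$ and its inverse, costing $2\QCCEX(G)$ per query (Appendix~A); this yields $O(\sqrt n\,\QCCEX(G))$ communication, one-sided error, and \emph{no} shared entanglement at all, which in particular implies the stated theorem with room to spare. If you want a self-contained proof, either follow that route (so the entangled-reflection problem never arises), or actually carry out the noisy amplitude amplification analysis: specify who holds the index register, which reflection is performed locally, and bound quantitatively how the deviation from the ideal Grover iterate grows over $O(\sqrt n)$ rounds while the $\lceil\log n\rceil$ EPR pairs are reused.
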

Using this protocol as a subroutine, the authors of Ref.~\cite{CCH+22} constructed the main protocol for $f \circ G$, which inherently requires a certain amount of the entanglement.
\par
On the other hand, in the case of Set-Disjointness, Aaronson and Ambainis~\cite{AA05} showed a protocol with $O(\sqrt{n})$ qubits of communication
which \emph{does not use any shared entanglement} but does find a coordinate $i$ satisfying $x_i \wedge y_i = 1$ with probability $99/100$. 
Based on the construction of the protocol in~\cite{AA05} rather than the noisy amplitude amplification technique used in~\cite{CCH+22},
we successfully construct a generalized version of the above theorem in Proposition~\ref{Prop_finding_many} which does not require any shared entanglement.
Once we show the generalized version, the rest is shown in a similar manner as in~\cite{CCH+22}, which is described in Section~\ref{sec_query_comm}.
Thus, we obtain the protocol for $\textsf{SYM} \circ G$ using $O(Q(\textsf{SYM})\QCCEX(G))$ qubits which does not use any shared entanglement.

\subsection{Second result: On tight upper bounds for $\textsf{SYM} \circ \textsf{AND}_2$}\label{subsec_second_result}
In our second result, we focus on tight upper bounds on the quantum communication complexity of $\textsf{SYM} \circ \textsf{AND}_2$.
We first note here that the paper~\cite{CCH+22} and our first result already exhibit protocols with $O(Q(\textsf{SYM}))$ qubits which are more efficient 
than the protocol in~\cite{BCW98} with $O(Q(\textsf{SYM)} \log n)$ qubits.
However, even a protocol with $O(Q(\textsf{SYM}))$ qubits of communication does not generally give a tight upper bound.
For example, the quantum communication complexity of $\textsf{AND}_n \circ \textsf{AND}_2$ is $O(1)$ but $Q(\textsf{AND}_n) = \Theta(\sqrt{n})$.
Therefore, we need to develop another technique to show a tight upper bound.
\par
In this framework, Razborov~\cite{Raz03} and Sherstov~\cite{She11} showed the following strong result.
\begin{Thm*}[{\cite{Raz03, She11}}]
Let $\textsf{SYM}_n : \Bset^n \to \Bset$ be a symmetric function and $D:\{0, \ldots, n\} \to \Bset$ be a function satisfying\footnote{Note that for any symmetric function $f$, there is a corresponding function $D$ satisfying $f(x) = D(|x|)$ where $|x|$ denotes the Hamming weight of a bit string $x$.} $\textsf{SYM}_n(x) = D(|x|)$.
Define
\begin{eqnarray*}
l_0(D) &=& \max\big\{l \:|\:  1 \leq l\leq n/2 \text{~and~}D(l) \neq D(l - 1)\big\},\\
l_1(D) &=& \max\big\{n - l \: |\: n/2 \leq l < n \text{~and~}D(l) \neq D(l + 1)\big\}.
\end{eqnarray*}
Then we have $\QCCEN(\textsf{SYM}_n \circ \textsf{AND}_2) \in \Omega(\sqrt{n l_0(D)} + l_1(D))$
and $\QCC(\textsf{SYM}_n \circ \textsf{AND}_2) \in O(\{\sqrt{n l_0(D)} + l_1(D)\}\log n)$.
\end{Thm*}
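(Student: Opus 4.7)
The plan is to prove the upper and lower bounds separately, exploiting the fact that, by the definition of $l_0(D)$ and $l_1(D)$, the predicate $D$ is constant on the ``middle'' range $\{l_0(D)+1, \ldots, n - l_1(D) - 1\}$. Writing $t := |\textsf{AND}_2(X,Y)|$ for the Hamming weight of the coordinate-wise AND vector, any correct protocol only needs to determine $t$ exactly when $t \le l_0(D)$ or $t \ge n - l_1(D)$, and may otherwise output the common middle value.

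For the upper bound $O((\sqrt{n\, l_0(D)} + l_1(D)) \log n)$, I would first have Alice and Bob exchange their individual Hamming weights $k := |X|$ and $m := |Y|$ using $O(\log n)$ bits. The inequalities $\max(0, k+m-n) \le t \le \min(k, m)$ then restrict the possible values of $t$. If $\min(k, m) \ge n - l_1(D)$, both $X$ and $Y$ have at most $l_1(D)$ zero-coordinates, so Alice classically sends Bob the positions of her zeros using $O(l_1(D) \log n)$ bits; Bob can then reconstruct $\textsf{AND}_2(X, Y)$ and output $D(t)$ directly. Otherwise $\min(k, m) < n - l_1(D)$ forces $t < n - l_1(D)$, and the only remaining task is to decide whether $t \le l_0(D)$ and, in that case, to identify $t$ exactly; the Brassard--H{\o}yer--Mosca--Tapp quantum counting routine, thresholded at $l_0(D)$, accomplishes this with $O(\sqrt{n\, l_0(D)})$ queries to the AND oracle. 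Since each such query can be simulated with $O(1)$ qubits in the style of~\cite{BCW98}, and since the counting error can be amplified by $O(\log n)$ repetitions and majority voting, the total communication is $O(\sqrt{n\, l_0(D)} \log n)$ qubits, which combines with the first branch to give the claimed bound.

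For the lower bound $\Omega(\sqrt{n\, l_0(D)} + l_1(D))$, the approach is Razborov's dual-polynomial / discrepancy method, as refined by Sherstov's pattern-matrix framework. The $l_1(D)$ summand follows from a near-all-ones rectangle argument: for each index $j \ge n - l_1(D)$ at which $D$ changes, the protocol must distinguish $t = j$ from $t = j \pm 1$ on a large submatrix of the communication matrix of $\textsf{SYM}_n \circ \textsf{AND}_2$, and quantifying the discrepancy of this submatrix yields an $\Omega(l_1(D))$-qubit bound. The $\sqrt{n\, l_0(D)}$ summand is obtained by lifting Paturi's bound $\widetilde{\deg}(D) = \Omega(\sqrt{n\, l_0(D)})$ on the approximate degree of the symmetric predicate $D$ to a communication bound; the lifting uses the pattern-matrix construction, which converts a degree-$d$ dual polynomial witness for $D$ into a matrix of small $\gamma_2^\ast$-norm whose sign pattern correlates with $\textsf{SYM}_n \circ \textsf{AND}_2$, and the resulting $\gamma_2$-norm lower bound then implies $\Omega(d)$ qubits of communication.

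The main obstacle is the lower bound: the polynomial-to-matrix lifting is not syntactic and requires constructing an explicit dual witness whose properties are preserved under composition with $\textsf{AND}_2$. The upper bound is conceptually straightforward, with the only delicate step being error control in quantum counting near the threshold $l_0(D)$, which is handled by standard amplification. I therefore expect the write-up to present the protocol above as a clean case analysis on $(k, m)$, and to import the pattern-matrix and Paturi lower bounds as black boxes.
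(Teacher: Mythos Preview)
The paper does not prove this theorem: it is quoted as a known result of Razborov and Sherstov and used as a benchmark against which the paper's own Theorems~\ref{Thm_tight_sym_1} and~\ref{Thm_tight_sym_2} are compared. There is therefore no ``paper's own proof'' of the lower bound to compare against; your sketch of the lower bound (Paturi's approximate-degree bound lifted via the pattern-matrix method, plus a separate argument for the $l_1(D)$ summand) points to the correct literature, but the paper simply cites it.

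For the upper bound, your case analysis is essentially the decomposition the paper carries out in Section~\ref{sec_upper_sym} when proving its stronger Theorems~\ref{Thm_tight_sym_1} and~\ref{Thm_tight_sym_2}: split $D = D_0 \vee D_1$ according to whether the Hamming weight falls in the low range $[0,l_0(D)]$ or the high range $(n-l_1(D),n]$, handle the two pieces separately, and output the constant middle value otherwise. Your treatment of the high piece---send the $\le l_1(D)$ zero-positions explicitly for $O(l_1(D)\log n)$ bits---is exactly the naive argument that yields the $\log n$ factor in the cited theorem; the paper replaces this step by the $O(l_1(D))$-bit sparse-set-intersection protocol of~\cite{BCK+14} (Proposition~\ref{Prop_intersection}) to remove the $\log n$. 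Your treatment of the low piece via thresholded quantum counting plus the~\cite{BCW98} simulation is equivalent to the paper's use of $Q(f^0_n)=\Theta(\sqrt{nl_0(D)})$ together with Theorem~\ref{Thm_sym_query}; both routes give $O(\sqrt{nl_0(D)}\log n)$ for the cited theorem and $O(\sqrt{nl_0(D)})$ once the BCW overhead is removed. One small correction: by the definitions of $l_0(D)$ and $l_1(D)$, the predicate $D$ is constant on the closed interval $[l_0(D),\,n-l_1(D)]$, not merely on $\{l_0(D)+1,\ldots,n-l_1(D)-1\}$; this matters when checking that your case split is exhaustive.
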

This theorem already shows the nearly tight bound $\QCCEN(\textsf{SYM}_n \circ \textsf{AND}_2) = \tilde{\Theta}(\sqrt{n l_0(D)} + l_1(D))$ up to a multiplicative $\log n$ factor.
To show an exact tight upper bound, it is thus sufficient to create a protocol with $O(\sqrt{nl_0(D)} + l_1(D))$ qubits of communication 
by removing the $\log n$ factor.
In this paper, we successfully show that the multiplicative $\log n$ factor is not required in the model with shared entanglement.
That is, we get the following theorem.
\begin{Thm}\label{Thm_tight_sym_1}
For any symmetric function $\textsf{SYM}_n: \Bset^n \to \Bset$,
$\QCCEN(\textsf{SYM}_n \circ \textsf{AND}_2) \in O(\sqrt{nl_0(D)} + l_1(D))$ holds.
\end{Thm}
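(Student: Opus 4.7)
The plan is to decompose the symmetric profile $D$ (with $\textsf{SYM}_n(x) = D(|x|)$) into a low-weight part and a high-weight part, and handle each with its own protocol. By the definitions of $l_0(D)$ and $l_1(D)$, the function $D$ is constant on the middle interval $(l_0(D), n - l_1(D))$, so we may write $D(w) = D_L(w) \oplus D_H(w) \oplus c$ where $D_L$ agrees with $D$ on $[0, l_0(D)]$ and is constant beyond, $D_H$ agrees with $D$ on $[n - l_1(D), n]$ and is constant below, and $c \in \{0,1\}$ is the middle constant. It then suffices to build a protocol for $D_L \circ \textsf{AND}_2$ of cost $O(\sqrt{nl_0(D)})$ qubits and a protocol for $D_H \circ \textsf{AND}_2$ of cost $O(l_1(D))$ qubits, XOR-ing their outputs together with $c$ at the end.

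For the low part, I invoke Theorem~\ref{Thm_sym_query} with $f = D_L$ and $G = \textsf{AND}_2$. Since $D_L$ is symmetric with variation confined to $[0, l_0(D)]$, the standard query-complexity characterization of symmetric functions (Beals--Buhrman--Cleve--Mosca--de~Wolf) gives $Q(D_L) = O(\sqrt{nl_0(D)})$, and $\QCCEX(\textsf{AND}_2) = O(1)$. Theorem~\ref{Thm_sym_query} then produces the desired $O(\sqrt{nl_0(D)})$-qubit protocol, which in fact requires no shared entanglement.

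For the high part, the key structural fact is that $D_H(w)$ varies only when $w \geq n - l_1(D)$, equivalently when $|\overline{X} \vee \overline{Y}| \leq l_1(D)$; this promise forces both $|\overline{X}|$ and $|\overline{Y}|$ to be at most $l_1(D)$, reducing the task to exactly computing $|\overline{X} \vee \overline{Y}|$ on two $l_1(D)$-sparse inputs. The plan is to build an entanglement-assisted subroutine that accomplishes this in $O(l_1(D))$ qubits, preceded by a cheap test (at the same cost) that decides whether the sparsity promise actually holds, defaulting to the constant value of $D_H$ otherwise.

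The main obstacle is this sparse-union subroutine: Razborov's $(\sqrt{nl_0(D)} + l_1(D))\log n$ upper bound pays the extra $\log n$ factor precisely because naming a single coordinate of $[n]$ costs $\Theta(\log n)$ bits, so any direct transmission of a sparse support already incurs $\Theta(l_1(D)\log n)$ communication. Shaving this log factor requires shared entanglement in an essential way. The plan is to use EPR pairs both as a source of shared randomness and as a quantum-compression resource: an almost collision-free hash first reduces the universe $[n]$ to one of size polynomial in $l_1(D)$, and then the hashed sparse vectors are exchanged via a combination of superdense coding and quantum-fingerprinting-style verification, keeping the total quantum communication at $O(l_1(D))$. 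Establishing correctness and the log-free cost of this subroutine is the crux of the argument; once in place, combining it with the low-part protocol from Theorem~\ref{Thm_sym_query} yields the claimed tight bound.
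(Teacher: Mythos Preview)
Your decomposition and your treatment of the low part are essentially the paper's: split $D$ at $l_0(D)$ and $n-l_1(D)$, handle the low part by Theorem~\ref{Thm_sym_query} together with $Q(D_L)=O(\sqrt{nl_0(D)})$. The XOR splitting is a harmless variant of the paper's OR splitting (they normalise $c=0$).

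The gap is in the high part. You correctly reduce to exactly computing $|\overline{X}\vee\overline{Y}|$ under the promise $|\overline{X}|,|\overline{Y}|\le l_1(D)$, but your proposed subroutine is only a sketch, and its two quantum ingredients do not do the work you need: superdense coding buys at most a constant factor, and fingerprinting gives equality tests, not recovery of a $k$-element set; neither removes the $\log n$ per-index cost. More importantly, your assertion that ``shaving this log factor requires shared entanglement in an essential way'' is false. The paper simply invokes the classical public-coin protocol of Brody--Chakrabarti--Kondapally--Woodruff--Yaroslavtsev (Proposition~\ref{Prop_intersection}, from \cite{BCK+14}): given $x',y'\in\{0,1\}^n$ with $|x'|,|y'|\le k$, one can output the full set $\{i:x'_i=y'_i=1\}$ with $O(k)$ bits of communication and shared randomness. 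Applying this to $x'=\neg x$, $y'=\neg y$ with $k=l_1(D)$, and then sending $|\neg x|,|\neg y|$ in $O(\log l_1(D))$ bits, yields $|x\wedge y|$ exactly with $O(l_1(D))$ classical bits and public coins only. Since shared randomness is free in the entanglement model, this already proves Theorem~\ref{Thm_tight_sym_1}; the paper in fact stresses that the protocol lives in the weaker shared-randomness model, which is strictly stronger than what your entanglement-based plan would give even if it were made to work.
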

In the model without shared entanglement, we also show a similar statement, albeit with an additive $\log \log n$ factor.
Thus we show

\begin{Thm}\label{Thm_tight_sym_2}
For any symmetric function $\textsf{SYM}_n: \Bset^n \to \Bset$,
$\QCC(\textsf{SYM}_n \circ \textsf{AND}_2) \in O(\sqrt{nl_0(D)} + l_1(D) + \log \log n)$ holds.
\end{Thm}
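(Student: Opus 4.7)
}
The plan is to start from the entanglement-assisted protocol $\Pi$ of Theorem~\ref{Thm_tight_sym_1}, which achieves communication $O(\sqrt{nl_0(D)} + l_1(D))$ with shared entanglement, and to simulate it in the private-coin, entanglement-free model at an additive cost of only $O(\log \log n)$ qubits. Hence the first step is to inspect $\Pi$ and identify precisely where the shared entanglement is being used.

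The Grover-style search component of $\Pi$, which is responsible for the $\sqrt{nl_0(D)}$ term and locates or rules out indices $i$ with $X_i \wedge Y_i = 1$, is already available in an entanglement-free form via Proposition~\ref{Prop_finding_many}, so one can substitute that version directly. The $l_1(D)$ contribution is a linear budget, so a standard entanglement-free subroutine for handling Hamming weights close to $n$ can be accommodated within $O(l_1(D))$ qubits. The only remaining role of shared entanglement in $\Pi$ should then be an auxiliary comparison step: most plausibly an $\EQ$-like fingerprint comparison on $O(\log n)$-bit registers (for instance, certifying a candidate Hamming-weight count, or confirming that the search phase has collected enough $1$-positions), which with shared entanglement costs $O(1)$ qubits via superdense coding or the entanglement-assisted $\EQ$ protocol of~\cite{BdW01}. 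Replacing that step by the Buhrman--Cleve--Watrous--de~Wolf quantum fingerprinting protocol for $\EQ_m$, which achieves bounded error on $m$-bit inputs using $O(\log m)$ qubits and only private randomness, costs $O(\log \log n)$ qubits when $m = O(\log n)$; a constant number of repetitions keeps the total error below $1/3$. Summing over the phases yields the target bound.

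The main obstacle will be the localization step: rigorously showing that the entanglement in $\Pi$ is confined to a subroutine with an $O(\log n)$-bit interface rather than being consumed globally throughout the protocol. If it is instead used in many places (for example, to inject correlated randomness repeatedly, which by Gavinsky's result cannot be handled by a Newman-style argument in the usual sense), a flat substitution would cost $\Omega(\log n)$ extra qubits and break the bound. The fallback plan is to rebuild the protocol around the entanglement-free search of Proposition~\ref{Prop_finding_many}, handle the $l_1(D)$ regime by direct entanglement-free communication within its budget, and invoke the fingerprinting subroutine above only at a final acceptance stage; a careful error-amplification analysis then isolates the only overhead relative to Theorem~\ref{Thm_tight_sym_1} as the claimed $O(\log\log n)$ additive term.
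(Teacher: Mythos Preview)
Your plan misidentifies where the $\log\log n$ term comes from. In the paper, the protocol of Theorem~\ref{Thm_tight_sym_1} never uses genuine entanglement at all: the $\sqrt{nl_0(D)}$ part is already entanglement-free via Theorem~\ref{Thm_sym_query}, and the $l_1(D)$ part is handled by a purely \emph{classical} public-coin protocol (Proposition~\ref{Prop_intersection}, a variant of \cite{BCK+14}) that computes the set $\{i:\neg x_i=\neg y_i=1\}$ when $|\neg x|,|\neg y|\le l_1(D)$, and hence $|x\wedge y|$ by inclusion--exclusion. The only shared resource is therefore \emph{shared randomness}, not EPR pairs, and the $\log\log n$ arises from Newman's theorem applied on the restricted input domain $\{x:|\neg x|\le l_1(D)\}$, which has size at most $n^{l_1(D)}$, giving $O(\log l_1(D)+\log\log n)$ private random bits that Alice can simply transmit.

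There is no $\EQ$-style fingerprint comparison anywhere in the argument, so your plan to substitute a Buhrman--Cleve--Watrous--de~Wolf fingerprinting step is aimed at a component that does not exist. Your vaguer sentence that ``a standard entanglement-free subroutine for handling Hamming weights close to $n$ can be accommodated within $O(l_1(D))$ qubits'' is the crux of the whole theorem, and you leave it unjustified; the actual content is precisely the classical sparse-intersection protocol plus Newman. Your fallback paragraph is closer in spirit, but it still routes the $\log\log n$ through fingerprinting rather than through derandomization of the public coins, and Gavinsky's obstruction is irrelevant here since no entanglement is being removed---only shared randomness.
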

This shows, for the first time,  the tight relation $\QCCEN(\textsf{SYM}_n \circ \textsf{AND}_2) = \Theta(\sqrt{n l_0(D)} + l_1(D))$ in the model with shared entanglement, matching the lower bound by~\cite{Raz03, She11}.
In the model without shared entanglement, however, there is still a $\log \log n$ gap between the communication cost of our protocol and the lower bound~\cite{Raz03, She11}.
To fill this gap, we also show that our protocol without shared entanglement is in fact optimal:

\begin{Prop}\label{Prop_lower_private}
For any non-trivial symmetric function $f_n: \Bset^n \to \Bset$,  
\begin{itemize}
\item if the function $f_n$ satisfies $l_0(D_{f_n}) > 0$ or $l_1(D_{f_n}) > 1$,
$\QCC(f_n \circ \textsf{AND}_2) \in \Omega(\sqrt{n l_0(D_{f_n})} + l_1(D_{f_n}) + \log \log n)$ holds.
\item Otherwise (If $f_n$ satisfies $l_0(D_{f_n}) = 0$ and $l_1(D_{f_n}) \leq 1$), $\QCC(f_n \circ \textsf{AND}_2) \in \Theta(1)$ holds.
\end{itemize}
\end{Prop}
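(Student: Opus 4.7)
The plan is to separate the two cases of the statement and, in the main case, to reduce Equality on a large alphabet to $f_n \circ \textsf{AND}_2$. In the second (``easy'') case $l_0(D_{f_n}) = 0$ and $l_1(D_{f_n}) \leq 1$, the two constraints force $D_{f_n}$ to be constant on $\{0,1,\ldots,n-1\}$, and non-triviality of $f_n$ then forces $D_{f_n}(n) \neq D_{f_n}(n-1)$; hence $f_n = \textsf{AND}_n$ up to negation. Since $f_n \circ \textsf{AND}_2(X,Y) = \pm\bigl(\textsf{AND}_n(X) \wedge \textsf{AND}_n(Y)\bigr)$, Alice computes $\textsf{AND}_n(X)$ locally and sends the single resulting bit to Bob, giving $\QCC(f_n\circ \textsf{AND}_2)=\Theta(1)$.

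In the first (``hard'') case, $\QCC(f_n\circ\textsf{AND}_2) \geq \QCCEN(f_n\circ\textsf{AND}_2) = \Omega(\sqrt{n\,l_0(D_{f_n})}+l_1(D_{f_n}))$ is immediate from the Razborov--Sherstov theorem quoted in the excerpt; the new ingredient is the $\log\log n$ term. When $l_0(D_{f_n}) \geq 1$ the term $\sqrt{nl_0}$ already dominates $\log\log n$ for $n$ past a small constant, so it suffices to treat $l_0(D_{f_n}) = 0$ and $l_1(D_{f_n}) \geq 2$. Since $l_1\geq 2$ and by definition $l_1 \leq n/2$, I fix some $l^*\in\{\lceil n/2\rceil,\ldots,n-2\}$ with $D_{f_n}(l^*)\neq D_{f_n}(l^*+1)$ and any $T\subseteq\{1,\ldots,n\}$ with $|T| = l^*+2 \in [n/2+2,\,n]$. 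Given $a,b\in T$, Alice locally sets $X(a) := \mathbf{1}_{T\setminus\{a\}}$ and Bob locally sets $Y(b) := \mathbf{1}_{T\setminus\{b\}}$, so that $X(a)\wedge Y(b) = \mathbf{1}_{T\setminus\{a,b\}}$ has Hamming weight $l^*+1$ when $a=b$ and $l^*$ otherwise; hence $f_n$ applied to this string yields one of two distinct values according to $\textsf{EQ}(a,b)$. This is a zero-communication reduction from $\textsf{EQ}$ on the $|T|$-letter alphabet to $f_n\circ\textsf{AND}_2$, giving $\QCC(f_n\circ\textsf{AND}_2) \geq \QCC(\textsf{EQ}_{[|T|]}) = \Omega(\log\log|T|) = \Omega(\log\log n)$.

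The main obstacle is the invoked lower bound $\QCC(\textsf{EQ}_{[N]}) = \Omega(\log\log N)$ on private quantum communication of Equality over an $N$-letter alphabet. I plan to derive this from the Buhrman--de~Wolf approximate-rank lower bound $\QCC(f) \geq \tfrac{1}{2}\log \mathrm{apxrank}_{1/3}(M_f) - O(1)$ combined with the standard fact that the $N\times N$ identity matrix has approximate $(1/3)$-rank $\Theta(\log N)$, established by a Johnson--Lindenstrauss style ``near-orthogonal vectors need $\Omega(\log N)$ dimensions'' argument; the sign Equality matrix $2I_N-J_N$ differs from the identity by rank one, so the same bound applies, yielding $\Omega(\log\log N)$. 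A minor subtlety in the reduction itself is that we need $|T|=\Omega(n)$ independently of $l^*$, which is guaranteed by the constraint $l^* \geq \lceil n/2\rceil$ built into the definition of $l_1$.
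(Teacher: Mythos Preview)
Your proof is correct and matches the paper in the easy case ($l_0=0$, $l_1\le 1$) and in the observation that when $l_0\ge 1$ the $\log\log n$ term is subsumed by $\sqrt{nl_0}$, leaving only the case $l_0=0$, $l_1\ge 2$ to establish $\QCC(f_n\circ\textsf{AND}_2)\in\Omega(\log\log n)$.

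For that remaining case the paper takes a different route: it exhibits a fooling set
\[
\mathrm{FS}_n=\{(x,x):|\neg x|=l_1(D_{f_n})-1\}
\]
of size $\binom{n}{l_1-1}\ge n$, deduces $\mathrm{DCC}(f_n\circ\textsf{AND}_2)\ge \log_2|\mathrm{FS}_n|\ge \log_2 n$, and then applies Kremer's inequality $\QCC(g)\ge\Omega(\log\mathrm{DCC}(g))$. Your argument instead \emph{reduces} Equality over the alphabet $T$ (with $|T|=l^\ast+2\ge n/2+2$) to $f_n\circ\textsf{AND}_2$ via the local encoding $a\mapsto \mathbf{1}_{T\setminus\{a\}}$, and then invokes the Buhrman--de~Wolf approximate-rank bound $\QCC(\textsf{EQ}_{[N]})=\Omega(\log\log N)$. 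The two approaches are morally close---your inputs $(X(a),Y(a))$ form a subfamily of the paper's fooling set---but technically distinct: the paper uses a larger combinatorial object to lower-bound deterministic complexity and then pays a logarithm via Kremer, while you reduce to a smaller but well-understood problem and appeal directly to its quantum lower bound. Either route yields $\Omega(\log\log n)$; the paper's is slightly more self-contained (fooling sets plus one black-box inequality), while yours is more conceptual and would transfer unchanged to any inner function $G$ for which one can realize the two Hamming weights $l^\ast$ and $l^\ast+1$ by local encodings.
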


In the proof of Proposition~\ref{Prop_lower_private}, the fooling set argument, a standard technique in communication complexity, plays a fundamental role.
\paragraph{Proof technique}
Let us now explain the main idea for the desired protocol used in Theorem~\ref{Thm_tight_sym_1} and Theorem~\ref{Thm_tight_sym_2}.
To create the desired protocol for $\textsf{SYM} \circ \textsf{AND}_2$, we first decompose the symmetric function $\textsf{SYM}(x) = D(|x|)$ into
the two symmetric functions $\textsf{SYM}_0 (x) := D_0(|x|)$ and $\textsf{SYM}_1 (x) := D_1(|x|)$ as follows:
\begin{equation*}
D_0(m) :=
\begin{cases}
D(m) &\text{if $m \leq l_0(D)$}\\
0  &\text{otherwise}
\end{cases}, \quad
D_1(m) =
\begin{cases}
D(m) &\text{if $m > n - l_1(D)$}\\
0  &\text{otherwise}
\end{cases}.
\end{equation*}
Note that the function $D$ takes a constant value on the interval $[l_0(D), n -l_1(D)]$.
As discussed in Section~\ref{sec_upper_sym}, it turns out that computing $\textsf{SYM}_0 \circ \textsf{AND}_2$ and $\textsf{SYM}_1 \circ \textsf{AND}_2$
separately is enough to compute the entire function $\textsf{SYM} \circ \textsf{AND}_2$.
Therefore, we only need to design two distinct protocols: one protocol for $\textsf{SYM}_0 \circ \textsf{AND}_2$ and the other protocol for $\textsf{SYM}_1 \circ \textsf{AND}_2$.
We now explain how to design the two protocols.
\begin{itemize}
\item To compute $\textsf{SYM}_0 \circ \textsf{AND}_2$, we simply use our first result. This uses $O(\sqrt{nl_0(D)})$ qubits of communication
since $Q(\textsf{SYM}_0) = O(\sqrt{n l_0(D)})$ holds~\cite{Pat92, BBC+01}.
\item To compute $\textsf{SYM}_1 \circ \textsf{AND}_2$, Alice and Bob directly compute the number of elements in the set $\{i \leq n \mid \textsf{AND}_2(x_i, y_i) = 1\}$
under the condition\footnote{If the condition does not hold, $\textsf{SYM}_1 \circ \textsf{AND}_2(x, y)$ must be zero. Alice and Bob check this condition with only two bits of communication. } $\min\{|x|, |y|\} \geq n -l_0(D)$. 
By taking the negation on the inputs, this problem is reduced to the computation of the number of elements in the set $\{i \leq n \mid x_i = 0 \text{~or~} y_i = 0\}$
under the condition $\min\{|x|, |y|\} \leq l_0(D)$. 
In fact, this problem and related problems have been analyzed in several works~\cite{HSZZ06, BCK+14, BCKWY16, HPZZ20}
and it is shown in~\cite{BCK+14} that $O(l_0(D))$ classical communication is sufficient when shared randomness is allowed
(and the additional $O(\log \log n)$ bits of communication\footnote{
In this case, $\min\{|x|, |y|\} \geq n -l_0(D)$ holds and therefore Newman's theorem tells us that $O(\log \log \#\{x \mid |x|\geq n - l_0(D)\})$ bits simulates the shared randomness. 
As shown in Section~\ref{sec_upper_sym}, the additional bits required are in fact bounded by $O(\log \log n)$.} 
are required to convert the shared randomness into private randomness).
\end{itemize}
Combining the above protocols, we create the desired protocol for $\textsf{SYM} \circ \textsf{AND}_2$ with $O(\sqrt{nl_0(D_f)} + l_1(D_f))$ communication.
One thing which should be noted is that as seen in the above protocol, 
what Alice and Bob needed to share beforehand is shared randomness, not shared entanglement.
This means that we in fact show the upper bound $O(\sqrt{nl_0(D_f)} + l_1(D_f))$ in a weaker communication model 
where shared randomness is allowed but shared entanglement is not allowed.
\par

\subsection{Organization of the paper}
In Section~2, we list several notations and facts used in this paper.
In Section~3, we generalize the protocol for Set-Disjointness~\cite{AA05} and create a useful protocol which is used for our main results.
In Section~4, we treat the first result and show Theorem~\ref{Thm_sym_query}.
In Section~5, we treat the second result and show Theorem~\ref{Thm_tight_sym_1} and Theorem~\ref{Thm_tight_sym_2}.

\section{Preliminaries}
For any function $f$, we denote the quantum communication complexity of zero-error protocols, 
the bounded-error  quantum communication complexity (with error $\leq 1/3$) \emph{without shared entanglement},
the bounded-error  quantum communication complexity (with error $\leq 1/3$) with \emph{shared entanglement} of a function $f$ 
by $\QCCEX(f), \QCC(f)$ and $\QCCEN(f)$ respectively.
Trivially, it holds that
$\QCCEN(f) \leq \QCC(f) \leq \QCCEX(f).$
We also denote the bounded-error query complexity of a function $f$ by $\mathrm{Q}(f)$.
For a $n$-bit string $x$, we denote the bitwise negation of $x$ by $\neg x = (\neg x_1, \ldots, \neg x_n)$.

\paragraph{Symmetric function}
Here we list several important facts about symmetric functions.
For any symmetric function $f$, $f$ can be represented as $f(x) = D_f(|x|)$ using some function $D_f:\{0, 1, \ldots, n\} \to \Bset$.
Denoting
\begin{eqnarray*}
l_0(D_f) &=& \max\big\{l \:|\:  1 \leq l\leq n/2 \text{~and~}D_f(l) \neq D_f(l - 1)\big\},\\
l_1(D_f) &=& \max\big\{n - l \: |\: n/2 \leq l < n \text{~and~}D_f(l) \neq D_f(l + 1)\big\},
\end{eqnarray*}
prior works~\cite{Pat92, BBC+01} show that the query complexity $Q(f)$ of a symmetric function $f$ is characterized as $Q(f) = \Theta(\sqrt{n(l_0(D_f) + l_1(D_f))})$.

\section{Communication cost for finding elements}\label{sec_find_elements}
This section is devoted to show Proposition~\ref{Prop_finding_many}, which is the quantum communication version of \cite[Theorem~5.16]{AA05}.
\begin{Prop}\label{Prop_finding_many} 
There is a protocol $\textsf{FIND-MORE}_k$ using $O(\sqrt{\frac{n}{k}} \QCCEX(G))$ qubits and using shared randomness which satisfies the following:
\begin{itemize}
\item The protocol outputs a coordinate $i \in [n]$ such that $G(X_i, Y_i) = 1$ w.p. $\geq 99/100$ when there exist at least $k$ such coordinates.
\item The protocol answers ``there is no such coordinate" w.p. $1$ when there is no such coordinate.
\item The protocol does not use any shared entanglement.
\end{itemize}

\end{Prop}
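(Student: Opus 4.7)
The plan is to generalize the Aaronson--Ambainis protocol~\cite{AA05} for Set-Disjointness (which handles the case $G=\textsf{AND}_2$, $k=1$) to an arbitrary two-party function $G$ and arbitrary threshold $k$. The high-level structure will be: (i) coherently simulate an oracle for $h(i) := G(X_i, Y_i)$ at a cost of $O(\QCCEX(G))$ qubits per query and no shared entanglement; (ii) run a Grover-style search with $\Theta(\sqrt{n/k})$ queries to sample a candidate marked index; (iii) verify the candidate with one more invocation of the exact protocol for $G$. Shared randomness enters only as a uniformly random permutation of the coordinates used for preprocessing.

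For step (i), I would use the standard Buhrman--Cleve--Wigderson simulation~\cite{BCW98}: Alice holds a control register $|i\rangle$ and a target bit $|b\rangle$; the two parties coherently run their zero-error protocol for $G$ with Alice feeding $X_i$ and Bob feeding $Y_i$ controlled on $i$, producing $G(X_i,Y_i)$ on a designated output qubit. They XOR this output into $b$ and then run the protocol in reverse to clean up every work register. Because $\QCCEX(G)$ is an \emph{exact} protocol, the reversal is perfect and leaves no residual entanglement across the cut, so the effective map is the oracle $|i\rangle|b\rangle \mapsto |i\rangle|b\oplus h(i)\rangle$ at cost $O(\QCCEX(G))$ qubits with only private ancillas.

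For step (ii), Alice and Bob first use shared randomness to agree on a uniformly random permutation $\pi$ of $[n]$ and relabel their inputs accordingly; this preserves the number of marked coordinates while breaking adversarial alignments that could otherwise cause Grover's phases to concentrate on unmarked bins. Alice then initializes $\frac{1}{\sqrt{n}}\sum_i |i\rangle$ and runs Grover's algorithm for $\Theta(\sqrt{n/k})$ iterations, implementing each query via step (i) and applying the diffusion operator locally on her index register at no communication cost. When at least $k$ marked coordinates exist, the measurement returns some $i^\star$ that is marked with probability $\Omega(1)$. For step (iii), the parties run the exact protocol for $G$ on $(X_{i^\star}, Y_{i^\star})$; if it returns $1$ they output $i^\star$, otherwise they repeat (ii)--(iii) a constant number of times before declaring ``no such coordinate''. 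Since verification is exact, the ``no'' case has zero error: whenever there is no marked coordinate, every verification fails and the protocol outputs ``no such coordinate'' with certainty.

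The main technical obstacle is step (i): one must check that the coherent forward-then-reverse execution of $\QCCEX(G)$ can be scheduled so that every round of quantum communication is properly uncomputed before the next Grover iteration begins, and that no hidden entanglement is left between Alice's and Bob's private registers. This is precisely the point at which the \emph{noisy} amplitude amplification of~\cite{CCH+22} was forced to provision $\lceil \log n\rceil$ EPR pairs in advance; exploiting exactness of $\QCCEX(G)$ together with the AA05-style interleaving of query and diffusion lets us keep all communication locally generated. Once the clean oracle is in hand, the Grover plus verification layer on top is routine and yields the claimed $O(\sqrt{n/k}\,\QCCEX(G))$ qubit bound.
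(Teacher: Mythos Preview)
Your oracle construction in step~(i) and the verification layer in step~(iii) are fine and match what the paper does in Appendix~\ref{App_modify}. The difficulty is step~(ii).

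Running Grover for a \emph{fixed} number $t = \Theta(\sqrt{n/k})$ of iterations does \emph{not} guarantee constant success probability when the true number $k^\ast$ of marked coordinates merely satisfies $k^\ast \geq k$. The amplitude on the marked subspace after $t$ iterations is $\sin\big((2t+1)\arcsin\sqrt{k^\ast/n}\big)$; for, say, $k^\ast = 4k$ and $t$ tuned to be optimal for $k$, this lands near a zero of $\sin$, and the measurement almost surely returns an unmarked index. Your random permutation of the coordinates does not help: Grover's success probability depends only on $k^\ast/n$, not on where the marked coordinates sit, so the justification about ``adversarial alignments'' and phases ``concentrating on unmarked bins'' is a misconception. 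What is needed here is either the BBHT trick of choosing the iteration count uniformly at random in $\{0,1,\ldots,\lfloor c\sqrt{n/k}\rfloor\}$, or some exponential-search/doubling scheme that scans over the unknown $k^\ast$; your proposal contains neither.

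This is also where your route diverges from the paper. The paper does not re-run Grover directly with a threshold-$k$ iteration count. Instead it takes the AA05 protocol $\textsf{FIND-ONE}$ (Lemma~\ref{Lem_finding_one_element}), which already handles an \emph{unknown} number $\geq 1$ of marked coordinates at cost $O(\sqrt{m}\,\QCCEX(G))$ on an instance of size $m$, and layers two classical reductions on top: (a) random subsampling down to an instance of size $n/\gamma$ (Lemma~\ref{Lem_finding_exact}, giving $\textsf{FIND-EXACT}_\gamma$), and (b) a geometric sweep over $\gamma = k, 2k, 4k, \ldots$ whose costs sum to $O(\sqrt{n/k}\,\QCCEX(G))$. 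The shared randomness in the paper is used for the subsampling, not for a permutation. If you patch step~(ii) with a randomized iteration count, your direct approach also works and is arguably cleaner, but as written the proposal has a genuine gap at exactly the point it claims $\Omega(1)$ success.
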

The proof is given in Section~\ref{subsec_proof_finding_many}.
\subsection{A key lemma}
To show Proposition~\ref{Prop_finding_many}, we first show the following lemma:
\begin{Lem}\label{Lem_finding_exact}
For $\gamma \in \mathbb{N}$, there is a protocol $\textsf{FIND-EXACT}_\gamma$ 
using $O(\sqrt{\frac{n}{\gamma}} \QCCEX(G))$ qubits and shared randomness which satisfies the followings:
\begin{itemize}
\item The protocol outputs a coordinate $i \in [n]$ such that $G(X_i, Y_i) = 1$ w.p. $\geq 99/100$ when there exist exactly $k$ such coordinates for some $k$ satisfying $3k/2 < \gamma < 3k$.
\item The protocol answers ``there is no such coordinate" w.p. $1$ when there is no such coordinate.
\item The protocol does not use any shared entanglement.
\end{itemize}
\end{Lem}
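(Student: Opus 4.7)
The approach is to combine random sampling via shared randomness with a distributed Grover-type search, following the broad template of Aaronson and Ambainis~\cite{AA05} for Set-Disjointness, but replacing each oracle access to an AND gate by a reversible execution of the exact protocol for the arbitrary inner function $G$.

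First, Alice and Bob use shared randomness to draw a uniformly random subset $S \subseteq [n]$ of size $|S| = \lceil c n / \gamma \rceil$ for a small absolute constant $c$. Writing $M = \{i \in [n] : G(X_i, Y_i) = 1\}$ for the (unknown) set of marked coordinates, the hypergeometric distribution gives $\E[|S \cap M|] = c|M|/\gamma$. When $|M| = k$ with $3k/2 < \gamma < 3k$, this expectation lies in $(c/3, 2c/3)$, and a standard first/second-moment argument shows that $\Pr_S[|S \cap M| \geq 1]$ is bounded below by an absolute constant. When $|M| = 0$, trivially $S \cap M = \emptyset$.

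Next, Alice and Bob run the BBHT variant of Grover search over $S$ to locate an element of $S \cap M$, simulating each oracle query by running the exact protocol for $G$ reversibly: compute $G(X_i, Y_i)$ into an ancilla and then uncompute all intermediate registers. Each simulated query costs $O(\QCCEX(G))$ qubits of quantum communication, and BBHT terminates after $O(\sqrt{|S|}) = O(\sqrt{n/\gamma})$ queries whenever $|S \cap M| \geq 1$, so the total cost is $O(\sqrt{n/\gamma} \cdot \QCCEX(G))$ qubits. BBHT returns a candidate $i^* \in S$, which the two parties verify by running the exact protocol for $G$ once on $(X_{i^*}, Y_{i^*})$; they output $i^*$ if verification returns $1$ and ``there is no such coordinate'' otherwise. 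Because the verification protocol is exact, the $|M| = 0$ case always produces the correct output with certainty, and repeating the whole procedure a constant number of times with fresh shared randomness boosts the success probability on the $|M| = k$ case to $99/100$.

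The main obstacle is that the number of marked items in the random sample is itself a random variable unknown to the parties, which rules out ordinary Grover search with a fixed iteration count. I would sidestep this by invoking BBHT, whose query cost is $O(\sqrt{|S|/t})$ without prior knowledge of the number of marked items $t \geq 1$ and degrades at worst to $O(\sqrt{|S|})$ when $t = 1$. Shared entanglement is not required because the exact protocol for $G$ uses none and the reversible simulation only involves forward and backward quantum message exchanges between Alice and Bob.
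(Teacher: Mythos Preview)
Your overall structure --- sample a random subset $S\subseteq[n]$ of size $\Theta(n/\gamma)$, run a search protocol on the sample, verify the output exactly, and repeat $O(1)$ times --- matches the paper's, the only cosmetic difference being that the paper samples by partitioning $[n]$ into $n/\gamma$ blocks of size $\gamma$ and drawing one coordinate per block rather than a uniform subset.

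The genuine gap is in the search step. You assert that each simulated oracle call in BBHT costs $O(\QCCEX(G))$ qubits of communication, but this is not justified. In the standard~\cite{BCW98} simulation the index register $|i\rangle$ resides on Alice's side, and Bob must receive $i$ (in superposition) before he can access $Y_i$ and take part in the exact protocol for $G$; this costs an additional $\Theta(\log|S|)$ qubits per query. Running BBHT this way gives $O\bigl(\sqrt{n/\gamma}\,(\log(n/\gamma)+\QCCEX(G))\bigr)$ communication, which is too large when $\QCCEX(G)=O(1)$ --- precisely the regime (e.g.\ $G=\textsf{AND}_2$) the paper targets. The alternative of keeping both parties synchronized on $|i\rangle_A|i\rangle_B$ so that each can load $X_i,Y_i$ locally is exactly the~\cite{CCH+22} approach and requires $\lceil\log|S|\rceil$ EPR pairs to prepare and maintain the joint superposition, which you have explicitly excluded. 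The protocol of~\cite{AA05} is \emph{not} BBHT wrapped in a communication simulation; it is a recursive search whose non-oracle communication totals only $O(\sqrt{n})$, and this is precisely what Lemma~\ref{Lem_finding_one_element} ($\textsf{FIND-ONE}$) packages. The fix is simply to invoke $\textsf{FIND-ONE}$ as a black box on the sampled instance of size $|S|$, as the paper does, rather than describing the search as plain BBHT.
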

In the proof of Lemma~\ref{Lem_finding_exact}, we use Lemma~\ref{Lem_finding_one_element} which is a modified protocol of the one given in \cite[Section~7]{AA05}.
See Appendix~\ref{App_modify} for the modification.
\begin{Lem}\label{Lem_finding_one_element}
There is a protocol $\textsf{FIND-ONE}$ with $O(\sqrt{n} \QCCEX(G))$ cost which satisfies the followings:
\begin{itemize}
\item The protocol outputs the coordinate $i \in [n]$ such that $G(X_i, Y_i) = 1$ w.p. $\geq 99/100$ when such $i$ exists.
\item The protocol answers ``there is no such coordinate" w.p. $1$ when there is no such coordinate.
\item The protocol does not use any shared entanglement.
\end{itemize}
\end{Lem}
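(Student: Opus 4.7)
The plan is to obtain $\textsf{FIND-ONE}$ by taking the Aaronson--Ambainis protocol for Set-Disjointness~\cite{AA05} and replacing its sole access to the inputs --- the primitive ``check whether $x_i \wedge y_i = 1$'' --- with a coherent invocation of a zero-error quantum protocol for the generic inner function $G$. The cost then scales by a factor of $\QCCEX(G)$ while all other features of the AA05 construction (in particular, the absence of any shared entanglement and the one-sided soundness) are preserved.

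First I would recall that~\cite{AA05} designs a quantum protocol for Set-Disjointness of cost $O(\sqrt{n})$ which does not use any shared entanglement. Their protocol runs a Grover-style search on Alice's side whose oracle evaluates the predicate $P(i) := x_i \wedge y_i$. The key technical point is that they avoid the $\log n$ blow-up of~\cite{BCW98} by using a block-amortization strategy (a recursive Grover construction) which keeps the \emph{total} number of qubits exchanged over the whole search at $O(\sqrt{n})$, so that the number of predicate evaluations times the qubits shipped per evaluation is already summed to $O(\sqrt{n})$.

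Next I would treat the AA05 protocol as an oracle algorithm: the only role of the input is through coherent evaluations of $P(i)$ for values of $i$ held in a control register in superposition. To instantiate $P(i) := G(X_i, Y_i)$, fix an exact quantum protocol $\Pi_G$ for $G$ of cost $\QCCEX(G)$. Because $\Pi_G$ is exact and quantum protocols are unitary up to measurement, Alice and Bob can run $\Pi_G$ coherently on their input registers controlled on $i$, write the answer into an ancilla used for phase kickback, and then uncompute $\Pi_G$ to clear all intermediate garbage. Each predicate evaluation thus costs $O(\QCCEX(G))$ qubits of communication; substituting this into the AA05 accounting gives total communication $O(\sqrt{n}\cdot \QCCEX(G))$, and no shared entanglement is introduced because neither AA05 nor $\QCCEX(G)$ uses any. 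For the zero-error property on ``no'' instances, I would append a verification step: if the search outputs a candidate coordinate $i^{\ast}$, Alice and Bob run $\Pi_G$ on $(X_{i^{\ast}}, Y_{i^{\ast}})$ once more and accept only if it returns $1$. Since $\Pi_G$ is exact, no $i^{\ast}$ can be falsely confirmed when no satisfying coordinate exists, so the protocol outputs ``there is no such coordinate'' with probability $1$ in that case, at the cost of only an additive $O(\QCCEX(G))$ qubits.

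The main obstacle is the faithful lifting of AA05's predicate to a coherent two-party subroutine: their original analysis exploits the fact that $x_i \wedge y_i$ is determined once Bob transmits the single bit $y_i$, and the block-amortization is tuned to the transmission of raw input bits rather than to generic oracle calls. I will need to check that the recursive Grover structure goes through when, instead of shipping raw bits of $y$, each block is implemented by coherently running $\Pi_G$ on the relevant coordinates. The natural way to do this in Appendix~\ref{App_modify} is to rewrite AA05's algorithm so that its predicate is a generic unitary oracle with a known per-call communication budget, and then verify that the resulting communication accounting becomes multiplicative in that per-call cost; everything else --- the success probability bound, the absence of shared entanglement, and the one-sided soundness after verification --- carries over unchanged.
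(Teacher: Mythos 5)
Your proposal is correct and follows essentially the same route as the paper's Appendix~\ref{App_modify}: there, the AA05 search protocol is treated exactly as an oracle algorithm whose only input access is the query $O_{\mathrm{AND}}$ (with both parties holding a copy of the index register), and this query is replaced by $O_G$, implemented by running an exact protocol for $G$ forward and then in reverse to clear garbage at cost $2\QCCEX(G)$ per call, giving $O(\sqrt{n}\,\QCCEX(G))$ total. The obstacle you flag resolves just as you suspect, since the per-call accounting in AA05 is agnostic to whether the $O(1)$-qubit query is raw-bit transmission or a generic coherent subroutine, and your extra verification call only makes the one-sided soundness (which the paper inherits implicitly from AA05 and exactness of the $G$-protocol) explicit.
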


\begin{proof}[Proof of Lemma~\ref{Lem_finding_exact}]
We first divide the set $\{1, \ldots, n\}$ into $n/\gamma$ subsets $A_j = \{(j - 1)\gamma + 1, \ldots, j\gamma\}~(1 \leq j \leq n/\gamma)$, each containing $\gamma$ sub-inputs.
Using shared randomness, Alice and Bob pick the set of coordinates $\{i_1, \ldots, i_{n/\gamma}\} \subset [n]$
where each $i_j$ is chosen uniformly at random from the set $A_j$.
Alice and Bob then perform the protocol $\textsf{FIND-ONE}$ pretending the inputs are $(X_{i_1}, \ldots, X_{i_{n/\gamma}})$
for Alice and $(Y_{i_1}, \ldots, Y_{i_{n/\gamma}})$ for Bob.
Since $\textsf{FIND-ONE}$ requires $O(\sqrt{n}\QCCEX(G))$ qubits of communication for the input length $n$,
this protocol with the input length $n/\gamma$ requires $O(\sqrt{\frac{n}{\gamma}}\QCCEX(G))$ qubits of communication.
\par
We now analyze the correct probability of this protocol, following the technique used in \cite[Lemma~5.15]{AA05}.
Assume there exist exactly $k$ coordinates satisfying $G(X_i, Y_i) = 1$ and $3k/2 < \gamma < 3k$ holds. 
Suppose $i_0$ satisfies $G(X_{i_0}, Y_{i_0}) = 1$. Then the coordinate $i_0$ is chosen as the shared randomness w.p. $1/\gamma$.
Given that $i_0$ is chosen, one of other coordinates $i'$ satisfying $G(X_{i'}, Y_{i'}) = 1$ is chosen w.p. 0 if $i_0, i'$ are in the same subset $A_j (1 \leq j \leq n/\gamma)$
and w.p. $1/\gamma$ if $i_0$ and $i'$ are in two different subsets. Therefore, the probability of ``the coordinate $i_0$ alone is chosen'' is at least
\begin{equation*}
\frac{1}{\gamma}\left(1 - \frac{k -1}{\gamma}\right)
\geq 
\frac{1}{\gamma}\left(1 - \frac{k}{\gamma}\right).
\end{equation*}
Considering the events ``the coordinate $i_0$ is chosen" are mutually disjoint, we see that the probability of ``exactly one such coordinate is chosen'' is at least
$k/\gamma - (k/\gamma)^2$. Since $3k/2 < \gamma < 3k$ holds, we observe that the probability is at least $2/9$.
This shows the event ``at least one element is chosen'' occurs w.p. $\geq 2/9$.
\par
Therefore, by the property of $\textsf{FIND-ONE}$, our new protocol satisfies the followings:
\begin{itemize}
\item The protocol outputs the coordinate $i \in [n]$ such that $G(X_i, Y_i) = 1$ w.p. $\Omega(1)$ when there exist exactly $k$ such coordinates for some $k$ satisfying $3k/2 < \gamma < 3k$.
\item The protocol answers ``there is no such coordinate" w.p. $1$ when there is no such coordinate.
\item The protocol does not use any shared entanglement.
\end{itemize}
To amplify the success probability $\Omega(1)$ to $99/100$, Alice and Bob perform this above protocol recursively 
while at each repetition checking if the output $i_\mathrm{out}$ satisfies $G(X_{i_\mathrm{out}}, Y_{i_\mathrm{out}}) = 1$. 
This repetition uses only some constant overhead on the communication cost and hence we obtain the desired statement.
\end{proof}

\subsection{Proof of Proposition~\ref{Prop_finding_many}}\label{subsec_proof_finding_many}
Using the protocol $\textsf{FIND-EXACT}_\gamma$, we show Proposition~\ref{Prop_finding_many} as follows.
\begin{proof}[Proof of Proposition~\ref{Prop_finding_many}]
The protocol $\textsf{FIND-MORE}_k$ is executed as follows:
\begin{enumerate}[(1)]
\item For $j = 0$ to $\log_2 (n/k)$,  Alice and Bob perform $\textsf{FIND-EXACT}_{\gamma_j}$ where $\gamma_j = 2^j k$.
\item As shared randomness, Alice and Bob pick one coordinate $i$ uniformly at random from the set $[n]$
and check if $G(X_i, Y_i) = 1$. This is repeated for $O(1)$ times.
\end{enumerate}
We first analyze the communication cost of this protocol. The first step requires
\begin{equation*}
\sum_{j = 0}^{\log_2 (n/k)} O\left(\sqrt{\frac{n}{2^j k}} \QCCEX(G)\right)
= O\left(\sqrt{\frac{n}{k}}\QCCEX(G)\right) \sum_{j = 0}^{\log_2 (n/k)} \frac{1}{2^{j/2}}
= O\left(\sqrt{\frac{n}{k}}\QCCEX(G)\right)
\end{equation*}
qubits of communication.
The second step requires $O(\QCCEX(G))$ qubits of communication.
Therefore, in total, $O\left(\sqrt{\frac{n}{k}}\QCCEX(G)\right)$ qubits are used in this protocol.
\par
Next we analyze the correct probability of this protocol. Let $k^\ast \geq k$ be the number of coordinates satisfying $G(X_i, Y_i) = 1$.
If $k^\ast \leq n/3$, then there exists $j$ satisfying $3k^\ast/2 < \gamma_j < 3k^\ast$. 
Therefore, $\textsf{FIND-EXACT}_{\gamma_j}$ finds the desired coordinate w.p. $\geq 99/100$.
On the other hand, if $k^\ast > n/3$, the second step finds the desired coordinate w.p. $1/3$.
Then O(1) repetitions increase the success probability to $99/100$.
\end{proof}

\section{Communication protocol for symmetric functions}\label{sec_query_comm}

In \cite[Theorem~22 and Theorem~25]{CCH+22}, the following theorem has been shown (with a slightly different expression):
\begin{Thm*}[{\cite[Theorem~22 and Theorem~25]{CCH+22}}]
Suppose $\textsf{FIND-MORE}_k$ uses $m$ EPR-pairs as shared entanglement and \emph{arbitrarily much} shared randomness.
Then for any symmetric function $f: \Bset^n \to \Bset$ and any two-party function $G: \Bset^j \times \Bset^k \to \Bset$,
there is a protocol with $O(Q(f)\QCCEX(G))$ qubits which satisfies the followings:
\begin{itemize}
\item The protocol successfully computes $f \circ G$ with probability $\geq 99/100$.
\item The protocol uses $m \cdot O(l_0(D_f) + l_1(D_f))$ EPR-pairs as shared entanglement.
\item The protocol uses $O(\log n)$ bits of shared randomness.
\end{itemize}
\end{Thm*}

As is shown in Proposition~\ref{Prop_finding_many}, our modified protocol $\textsf{FIND-MORE}_k$ does not use any shared entanglement.
Therefore,  we set $m = 0$ in the statement above and obtain the following theorem.
(Note that $O(\log n)$ bits of shared randomness are included in a part of communication since the $O(\log n)$ bits
are negligible compared to $Q(f) \geq \Omega(\sqrt{n})$ when $f$ is not trivial.)

\begin{Thm1}
For any symmetric function $f: \Bset^n \to \Bset$ and any two-party function $G: \Bset^j \times \Bset^k \to \Bset$,
\begin{equation*}
\QCC(f \circ G) \in O(Q(f)\QCCEX(G)).
\end{equation*}
\end{Thm1}

\section{Tight upper bound for symmetric functions}\label{sec_upper_sym}
In this section, we show the following two theorems:

\begin{Thm2}
For any symmetric function $\textsf{SYM}_n: \Bset^n \to \Bset$,
$\QCCEN(\textsf{SYM}_n \circ \textsf{AND}_2) \in O(\sqrt{nl_0(D)} + l_1(D)$ holds.
\end{Thm2}
\begin{Thm3}
For any symmetric function $\textsf{SYM}_n: \Bset^n \to \Bset$,
$\QCC(\textsf{SYM}_n \circ \textsf{AND}_2) \in O(\sqrt{nl_0(D)} + l_1(D) + \log \log n)$ holds.
\end{Thm3}
To show these theorems, we use the following protocol that is a modification of the protocol given in \cite[Theorem~3.1]{BCK+14}.
For completeness, we describe the modification in Appendix~\ref{App_modify_set}.
\begin{Prop}\label{Prop_intersection}
Suppose the inputs $x, y \in \Bset^n$ satisfy $\max\{|x|, |y|\} \leq k$.
There is a public coin classical protocol\footnote{
Note that this protocol may use many amount of shared randomness.
} with $O(k)$ bits of communication 
which computes the set $\{i |x_i = y_i = 1\} \subset [n]$ w.p. $99/100$.
\end{Prop}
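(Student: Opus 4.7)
The plan is to adapt the classical sparse set intersection protocol of~\cite{BCK+14} to the precise output and success-probability requirements stated here. Viewing the inputs as subsets $S_A := \{i : x_i = 1\}$ and $S_B := \{i : y_i = 1\}$ of $[n]$, each of size at most $k$, the goal is to output $S_A \cap S_B$ with success probability at least $99/100$ using $O(k)$ bits of public-coin communication.

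First I would invoke the underlying hashing idea. Using the shared randomness, Alice and Bob sample a random hash function $h : [n] \to [Ck]$ for a sufficiently large constant $C$; by a birthday-style calculation, with constant probability each hash bucket contains at most one element of $S_A$ and at most one of $S_B$. For each bucket where both parties hold an element, they spend $O(1)$ amortized bits (for example by exchanging short fingerprints of the actual index inside the bucket) to decide whether the two elements coincide. Since there are at most $|S_A| + |S_B| \leq 2k$ occupied buckets, communication is $O(k)$.

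Second I would handle the collision case, which is the main technical obstacle: with non-negligible probability some bucket contains several elements of the same party, in which case the simple routine above can miss intersection elements. The standard fix is iterative refinement: run the basic subroutine, remove the confirmed intersection elements, and recurse on the unresolved sets with fresh shared randomness. Each round shrinks (in expectation) the residual size by a constant factor and uses a geometrically decreasing amount of communication, so the total cost remains $O(k)$. To amplify the success probability to $99/100$, I would run $O(1)$ independent trials and verify the candidate intersection $I$ by a fingerprint exchange; since $|I| \leq k$, this check costs $O(k)$ bits.

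The main subtlety, and where the modification relative to~\cite[Theorem~3.1]{BCK+14} enters, is enforcing one-sided error: the basic subroutine should output only true intersection elements (never spurious ones), so that combining independent trials by union yields $S_A \cap S_B$ with high probability. I expect this is achievable by appending a short verification step whenever a candidate is proposed, and the remaining argument --- tracking that all constants compose correctly into communication $O(k)$ and error $\leq 1/100$ --- is essentially bookkeeping. The precise details are deferred to Appendix~\ref{App_modify_set}.
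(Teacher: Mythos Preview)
Your proposal re-derives the sparse-intersection protocol from its hashing internals, whereas the paper treats \cite[Theorem~3.1]{BCK+14} as a black box and only patches two mismatches between that theorem and the present statement: (i) the cited theorem guarantees error $1/\mathrm{poly}(k)$, which can exceed $1/100$ for small $k$, so the paper simply runs the protocol with parameter $\max(k,k_0)$ for a fixed constant $k_0$; and (ii) the cited theorem bounds only the \emph{expected} communication, so the paper applies Markov's inequality and aborts once the cost reaches $200Ck$, adding at most $1/200$ to the failure probability. This yields worst-case $O(k)$ communication and error at most $1/200+1/200=1/100$ with essentially no new protocol design.

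Your route can in principle be made to work, but as written the sketch has real gaps. The recursion should be on the elements that landed in \emph{collided} buckets (those are the unresolved ones), not on the complement of already-confirmed intersection elements; otherwise the residual set need not shrink. More seriously, ``$O(1)$ amortized bits'' of fingerprint per bucket gives a constant false-positive rate per comparison, so over $\Theta(k)$ comparisons you expect $\Theta(k)$ spurious matches --- directly contradicting the one-sided, never-spurious guarantee you later rely on when taking unions across trials. Repairing this (longer fingerprints and a tighter accounting, or the actual mechanism of \cite{BCK+14}) is where the real work lies; the paper's black-box-plus-Markov argument sidesteps all of it.
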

Following the technique used in~\cite[Section~4]{Raz03}, we prove Theorem~\ref{Thm_tight_sym_1} and Theorem~\ref{Thm_tight_sym_2} as follows:
\begin{proof}[Proof of Theorem~\ref{Thm_tight_sym_1} and Theorem~\ref{Thm_tight_sym_2}]
Let us first describe some important facts based on the arguments in~\cite{Raz03,She11}.
For any symmetric function $f_n$, the corresponding function $D_{f_n}$ is
constant on the interval $[l_0(D_{f_{n}}), n - l_1(D_{f_{n}})]$.
Without loss of generality, assume $D_{f_{n}}$ takes $0$ on the interval.
(If $D_{f_{n}}$ takes $1$ on the interval, we take the negation of $D_{f_{n}}$.)
Defining $D_0$ and $D_1:\{0, \ldots, n\} \to \Bset$ as
\begin{equation*}
D_0(m) =
\begin{cases}
D_{f_{n}}(m) &\text{if $m \leq l_0(D_{f_{n}})$}\\
0  &\text{otherwise}
\end{cases},
D_1(m) =
\begin{cases}
D_{f_{n}}(m) &\text{if $m > n - l_1(D_{f_{n}})$}\\
0  &\text{otherwise}
\end{cases},
\end{equation*}
$D_{f_{n}} = D_0 \vee D_1$ holds.
Therefore, by defining $f^0_{n}(x) := D_0(|x|)$ and
$f^1_{n}(x) := D_1(|x|)$, we get $f_{n} \circ \textsf{AND}_2 = (f^0_{n} \circ \textsf{AND}_2) \vee (f^1_{n} \circ \textsf{AND}_2)$.
This means, computing $f^{0}_{n} \circ \textsf{AND}_2$ and $f^{1}_{n} \circ \textsf{AND}_2$ separately is sufficient to compute the entire function $f_{n} \circ \mathsf{AND}_2$.
As another important fact needed for our explanation, we note that the query complexity of $f^{0}_{n}$ equals to $O(\sqrt{n l_0(D_{f_{n}})})$
which is proven in \cite{Pat92}.
\par
From now on, we describe two protocols: one protocol for the computation of $f^0_n$ and the other one for the computation of $f^1_n$.
\begin{itemize}
\item {\bf Protocol for $f^0_n$}: We simply apply the protocol of Theorem~\ref{Thm_sym_query} with $G = \mathsf{AND}_2$ (note that $f^0_n$ is a symmetric function).
This protocol uses $O(\sqrt{n l_0(D_{f_{n}})})$ qubits because $Q(f_n^1) = \Theta(\sqrt{n l_0(D_{f_{n}})})$ holds.
\item {\bf Protocol for $f^1_n$}:  First, Bob sends Alice one bit: $1$ if $|\neg y| \leq l_1(D_{f_n})$ and $0$ otherwise.
If Alice receives $1$ and $|\neg x| \leq l_1(D_{f_n})$ holds, they perform the protocol of Proposition~\ref{Prop_intersection} with the inputs $\neg x$ and $\neg y$.
Otherwise, $\min\{|x|, |y|\} < n - l_0(D_{f_n})$ holds and therefore $f^0_n \circ \mathsf{AND}_2(x, y)$ must be zero by the definition of $D_1$.
After the execution of the protocol of Proposition~\ref{Prop_intersection}, Alice and Bob know the set $\{i \leq n \mid x_i = y_i = 0\}$.
Next, Alice sends $|\neg x|$ and Bob sends $|\neg y|$ using $\log l_0(D_{f_n})$ communication,  
and they finally compute $\#\{i\leq n \mid x_i = y_i = 1\}$ as $\#\{i\leq n \mid x_i = y_i = 1\} = n  + \#\{i \leq n \mid x_i = y_i = 0\} - |\neg x| - |\neg y|$.
This protocol uses $O(l_1(D_{f_n}))$ communication bits. 
\par
We then evaluate the cost for public coins.
Even though the execution of this protocol may require much shared randomness, Newman's theorem~\cite{New91} ensures that $O(\log\log |S|)$ bits are
sufficient when the inputs $x, y$ belong to a set $S$. Since $|\neg x|, |\neg y| \leq l_1(D_{f_n})$ holds when executed and using the fact
$\#\{x \in \Bset^n \mid |\neg x| \leq k\} \leq n^k$, 
we conclude that $O(\log (\log n^{l_1(D_{f_n})})) = O(\log l_1(D_{f_n}) + \log \log n)$ bits of shared randomness are sufficient.
Moreover, since $O(\log l_1(D_{f_n}))$ bits of shared randomness are negligible compared to $O(l_1(D_{f_n}))$ bits in communication 
and therefore included as a part of communication with no additional communication cost, we only need to use $O(\log \log n)$ bits as a shared randomness.

\end{itemize}
Combining these two protocols, we get the desired protocol with $O(\sqrt{n l_0(D_{f_n})} + l_1(D_{f_n}))$ cost which uses $ O(\log \log n)$ public coins.
This shows $\QCCEN(f_n \circ \textsf{AND}_2) \in O(\sqrt{n l_0(D_{f_n})} + l_1(D_{f_n}))$
and $\QCC(f_n \circ \textsf{AND}_2) \in O(\sqrt{n l_0(D_{f_n})} + l_1(D_{f_n}) + \log \log n)$ 
by Alice sending $O(\log \log n)$ random bits instead of the shared randomness.
\end{proof}

By combining the arguments we showed so far, we obtain the tight bound $\QCCEN(f_n \circ \textsf{AND}_2) \in \Theta(\sqrt{n l_0(D_{f_n})} + l_1(D_{f_n}))$
on the communication model with shared entanglement.
On the model without shared entanglement, our bound $\QCC(f_n \circ \textsf{AND}_2) \in O(\sqrt{n l_0(D_{f_n})} + l_1(D_{f_n}) + \log \log n)$
still have the additive $\log \log n$ difference from the lower bound.
We next show this upper bound is indeed optimal by using a standard technique, the \emph{fooling set} argument.

\begin{Prop1}
For any non-trivial symmetric function $f_n: \Bset^n \to \Bset$,  
\begin{itemize}
\item if the function $f_n$ satisfies $l_0(D_{f_n}) > 0$ or $l_1(D_{f_n}) > 1$,
$\QCC(f_n \circ \textsf{AND}_2) \in \Omega(\sqrt{n l_0(D_{f_n})} + l_1(D_{f_n}) + \log \log n)$ holds.
\item Otherwise (i.e., if $f_n$ satisfies $l_0(D_{f_n}) = 0$ and $l_1(D_{f_n}) \leq 1$), $\QCC(f_n \circ \textsf{AND}_2) \in \Theta(1)$ holds.
\end{itemize}
\end{Prop1}
\begin{proof}
Let us first prove that $\QCC(f_n \circ \textsf{AND}_2) \in \Theta(1)$ holds when $l_0(D_{f_n}) = 0$ and $l_1(D_{f_n}) \leq 1$ hold.
In this case, there are only two types of the functions: $f_n = \textsf{AND}_n$ or $f_n = \neg\textsf{AND}_n$.
In either case of the functions, Alice and Bob only need to send one single bit expressing whether $x = (1, \ldots, 1)$ for Alice ($y = (1, \ldots, 1)$ for Bob).
Therefore we obtain $\QCC(f_n \circ \textsf{AND}_2) \in \Theta(1)$ since a lower bound $\QCC(f_n \circ \textsf{AND}_2) \in \Omega(1)$ is trivial.
\par
The rest is to show 
$\QCC(f_n \circ \textsf{AND}_2) \in \Omega(\sqrt{n l_0(D_{f_n})} + l_1(D_{f_n}) + \log \log n)$ holds assuming $l_0(D_{f_n}) > 0$ or $l_1(D_{f_n}) > 1$.
First, we note that the $\log \log n$ factor becomes negligible comparing to $\sqrt{n l_0(D_{f_n})} + l_1(D_{f_n})$ when $l_0(D_{f_n}) > 0$ holds.
This means that the well-known lower bound $\Omega(\sqrt{n l_0(D_{f_n})} + l_1(D_{f_n}))$~\cite{Raz03} already gives a tight lower bound.
Therefore, we only need to show
$\QCC(f_n \circ \textsf{AND}_2) \in \Omega( l_1(D_{f_n}) + \log \log n)$ holds assuming $l_0(D_{f_n}) = 0$. 
Moreover, the lower bound
$\QCCEN(f_n \circ \textsf{AND}_2) \in \Omega(\sqrt{n l_0(D_{f_n})} + l_1(D_{f_n}))$ shown in~\cite{Raz03} implies
$\QCC(f_n \circ \textsf{AND}_2) \in \Omega(l_1(D_{f_n}))$.
Therefore, it is sufficient to show
$\QCC(f_n \circ \textsf{AND}_2) \in \Omega(\log \log n)$ when $l_0(D_{f_n}) = 0$ and $l_1(D_{f_n}) > 1$ hold.
\par
Assuming $l_0(D_{f_n}) = 0$, $l_1(D_{f_n}) > 1$ and $D_{f_n} \equiv 0$ on $[l_0(D_{f_n}), n - l_1(D_{f_n})]$ without loss of generality,
we show $\QCC(f_n \circ \textsf{AND}_2) \in \Omega(\log \log n)$.
To show this, we use the fooling set argument~\cite{KN96, RY20}.
Define
\begin{equation*}
\mathrm{FS}_n := \{(x, y) \in \{0, 1\}^{n} \times \{0, 1\}^n \mid x = y \text{~and~} |\neg x| = l_1(D_{f_n}) - 1\}.
\end{equation*}
Then we see that for any $(x, y) \in \mathrm{FS}_n$, $f_n \circ \textsf{AND}_2 (x, y) = 1$
and for any $(x, y), (x', y') \in \mathrm{FS}_n$, $(x, y) \neq (x', y')$ implies 
$f_n \circ \textsf{AND}_2 (x, y') =f_n \circ \textsf{AND}_2 (x', y) = 0$.
Therefore, the deterministic communication complexity $\mathrm{DCC}(f_n \circ \textsf{AND}_2)$
satisfies
\begin{equation*}
 \mathrm{DCC}(f_n \circ \textsf{AND}_2) \geq \log_2 |\mathrm{FS}_n|
\end{equation*}
by the fooling set argument.
As shown in~\cite{Kre95}, it is well-known that
$\QCC(f) \geq \log \mathrm{DCC}(f)$ for any function $f$.
Therefore, by observing $|\mathrm{FS}_n| = {n \choose l_1(D_{f_n}) - 1} \geq \Omega(n)$ for $l_1(D_{f_n}) > 1$,
we obtain the desired statement $\QCC(f_n \circ \textsf{AND}_2) \geq \Omega(\log \log n)$.
\end{proof}

\section*{Acknowledgement}
The author was partially supported by the MEXT Q-LEAP grant No. JPMXS0120319794. The author would like to take this opportunity to thank the “Nagoya University Interdisciplinary Frontier Fellowship” supported by Nagoya University and JST, the establishment of university fellowships towards the creation of science technology innovation, Grant Number JPMJFS2120. The author also would like to thank Fran\c{c}ois Le Gall for his kindness and valuable comments and Ronald de Wolf for kind comments on an earlier draft of this paper.

\bibliography{Citations/quant_info_D, Citations/comm_comp_D, Citations/books_D}
\bibliographystyle{unsrt}
\appendix
\section{Modification for Lemma~\ref{Lem_finding_one_element}}\label{App_modify}
Here we describe how the protocol given in~\cite[Section~7]{AA05} is modified to the protocol in~Theorem~\ref{Lem_finding_one_element}.
In~\cite[Section~7]{AA05}, the authors proposed a protocol that finds $i \in [n]$ such that $x_i \wedge y_i = 1$ 
where Alice is given $x \in \Bset^n$ and Bob is given $y \in\Bset^n$.
In the protocol, Alice and Bob perform the query
\[
O_{\mathrm{AND}} : |i, z\rangle_A |i\rangle_B \mapsto |i, z \oplus (x_i \wedge y_i)\rangle_A |i\rangle_B
\]
for $O(\sqrt{n})$ times and other operations which require $O(\sqrt{n})$ communication.
Since the query operation is implemented using $2$-qubits of communication, 
this protocol requires $2O(\sqrt{n}) + O(\sqrt{n}) = O(\sqrt{n})$ communication.
\par
Our modification for finding $i$ such that $G(X_i, Y_i) = 1$ is simple.
We just replace the query $O_\mathrm{AND}$ to
\[
O_G : |i, z\rangle_A |i\rangle_B \mapsto |i, z \oplus G(X_i, Y_i)\rangle_A |i\rangle_B.
\]
This protocol indeed finds the desired coordinate $i$, which is shown in the same manner as in~\cite[Section~7]{AA05}.
Let us analyze the communication cost of this protocol.
Since $\QCCEX(G)$ denotes the exact communication complexity of $G$, the operation $O_G$ is implemented using $2\QCCEX(G)$ qubits.
(First $\QCCEX(G)$ communication is used to compute $G$ and the second $\QCCEX(G)$ is used to compute reversely and  clear the unwanted registers.)
Other operations are the same as in the original protocol and therefore use $O(\sqrt{n})$ communication.
Considering that the operation $O_G$ is performed for $O(\sqrt{n})$ times, 
we see that our modified protocol uses $O(\sqrt{n}) + \QCCEX(G)O(\sqrt{n}) = O(\QCCEX(G) \sqrt{n})$ qubits of communication.

\section{Modification for Proposition~\ref{Prop_intersection} }\label{App_modify_set}
In \cite[Theorem~3.1]{BCK+14}, the authors originally showed the following.
\begin{Thm}\label{Thm_BCK}
Suppose the inputs $x, y \in \Bset^n$ satisfy $\max\{|x|, |y|\} \leq k$.
There exists an $O(\sqrt{k})$-round constructive randomized classical protocol that outputs the set $\{i \mid x_i = y_i = 1\}$
with success probability $1 - 1/\textrm{poly}(k)$. In the model of shared randomness the total expected communication is $O(k)$.
\end{Thm}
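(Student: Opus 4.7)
The plan is to build the protocol in two stages: first reduce the effective universe from $[n]$ to $\mathrm{poly}(k)$ via hashing, and then use a round-efficient sparse-recovery scheme to transmit the at-most-$k$ elements of $\{i : x_i = y_i = 1\}$, exploiting the hypothesis $\max\{|x|, |y|\} \leq k$ throughout.

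First I would use shared randomness to agree on a hash $h : [n] \to [Ck^2]$ for a sufficiently large constant $C$. By a birthday argument, $h$ is injective on $\mathrm{supp}(x) \cup \mathrm{supp}(y)$ except with probability $1/\mathrm{poly}(k)$, so from this point every surviving index is identifiable by its $O(\log k)$-bit image while the sparsities of both sides are unchanged. This preprocessing step uses no communication and no rounds, and it lets us charge $O(\log k)$ rather than $O(\log n)$ bits per element for the rest of the protocol.

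After hashing, Alice and Bob effectively hold sets $A, B \subseteq [Ck^2]$ of size at most $k$, and it suffices to recover the symmetric difference $A \triangle B$: Bob then reads off $A \cap B = B \setminus (A \triangle B)$ and announces it back to Alice in a final round. To recover $A \triangle B$ I would use a linear sparse sketch such as an invertible Bloom lookup table (IBLT) or an $\ell_0$-sampler. A one-round exchange of a single size-$O(k)$ sketch already yields $O(k)$ communication, but only in a single round and with the wrong round structure. To obtain the stated $O(\sqrt{k})$-round structure I would iterate instead: in round $r = 1, \ldots, O(\sqrt{k})$, Alice sends a sketch of size $O(\sqrt{k})$ built from a fresh random $O(1/\sqrt{k})$-fraction of her remaining support; Bob subtracts his own sketch of the same subsample, decodes up to $\Theta(\sqrt{k})$ new intersection elements, removes them from the active support, and acknowledges. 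The per-round cost is $O(\sqrt{k})$ bits and the communication over all rounds telescopes to $O(\sqrt{k}) \cdot O(\sqrt{k}) = O(k)$.

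The main obstacle will be ensuring that an $O(\sqrt{k})$-size sketch decodes correctly with probability $1 - 1/\mathrm{poly}(k)$ in every one of the $O(\sqrt{k})$ rounds simultaneously, since a vanilla IBLT of size $\Theta(m)$ decodes $m$ entries with only constant probability, and boosting via naive independent repetition would inflate communication by a $\log k$ factor. I expect this to be resolvable with expander-based IBLTs or $\ell_0$-samplers whose failure probability drops to $1/\mathrm{poly}(k)$ at only $O(\log k)$-bit per-cell overhead, which can be absorbed into the constant in front of $O(k)$. A secondary subtlety is converting the expected-communication bound into a high-probability one; I would handle this by pre-declaring a per-round budget and restarting the protocol with fresh shared randomness whenever it is exceeded, which costs only a constant factor in expectation and preserves the round count.
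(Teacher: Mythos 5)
You should first be aware that the paper does not prove this statement: it is quoted verbatim from \cite{BCK+14} (Theorem~3.1 there), and Appendix~B only massages it (a constant-size cutoff plus Markov's inequality) into Proposition~\ref{Prop_intersection}. So your argument has to stand on its own, and it has a genuine flaw: the reduction to recovering the symmetric difference $A\triangle B$ can never give $O(k)$ total communication. Conditioned on Bob's set $B$ (and on the shared hash $h$), the set $h(A)\setminus h(B)$ is essentially an arbitrary subset of up to $k$ elements of a $\mathrm{poly}(k)$-sized universe, hence carries $\Theta(k\log k)$ bits of entropy; any protocol after which Bob knows $A\triangle B$ --- which is exactly what your IBLT/$\ell_0$-sampler decoding hands him, round by round --- must therefore communicate $\Omega(k\log k)$ bits (and $\Omega(k\log(n/k))$ without the hashing step). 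The whole point of the theorem, and of the protocol of \cite{BCK+14} (as of H{\aa}stad--Wigderson for sparse disjointness), is that the \emph{intersection} is much cheaper than the symmetric difference: given $B$, the set $A\cap B$ is one of at most $2^{k}$ subsets of $B$, i.e.\ only $O(k)$ bits of entropy, and the known protocols exploit this by exchanging short fingerprints over many rounds so that intersection elements are only ever \emph{confirmed}, never explicitly named; no $\Theta(\log k)$-bit identifier is ever transmitted, and the per-round fingerprint lengths are tuned against geometrically shrinking candidate sets so the total telescopes to $O(k)$.

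The same problem surfaces concretely in your cost accounting: an IBLT or $\ell_0$-sampler cell must store $\Theta(\log k)$-bit identifiers (plus checksums), so a sketch of $O(\sqrt{k})$ cells costs $O(\sqrt{k}\log k)$ bits, not $O(\sqrt{k})$, and over $O(\sqrt{k})$ rounds this is $O(k\log k)$; the $\log k$ is a multiplicative overhead and cannot be ``absorbed into the constant in front of $O(k)$.'' Two smaller issues: hashing $2k$ support elements into $Ck^{2}$ bins gives collision probability $\Theta(1/C)$, a constant rather than $1/\mathrm{poly}(k)$ (fixable by taking range $k^{3}$, say, which keeps $O(\log k)$-bit names), and Bob's final announcement of $A\cap B$ as a list of hashed names again costs $\Theta(k\log k)$ bits, though that step could be avoided by letting Alice decode the difference herself. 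To obtain the stated bound you would need to abandon the set-reconciliation route and follow the iterated-fingerprinting scheme of \cite{BCK+14}.
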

To modify this theorem for Proposition~\ref{Prop_intersection}, we need to take care of the success probability and the \emph{expected} communication.
To take care of the success probability, we first take a sufficiently large constant $k_0$ such that for any $k \geq k_0$, $1/\textrm{poly}(k) \leq 1/200$.
If $k < k_0$ holds,  the parties perform the protocol in Theorem~\ref{Thm_BCK} with the constant $k_0$. This requires $O(k_0)$ expected communication.
Otherwise (i.e., when $k > k_0$ holds), the parties perform the protocol in Theorem~\ref{Thm_BCK} with the constant $k$, which requires $O(k)$ expected communication.
Since $k_0$ is a constant, the protocol by this modification still requires $O(k)$ expected communication with error $\leq 1/200$.
\par
To convert the expected communication to the worst-case communication, we use the Markov's inequality.
Suppose this protocol requires $C \cdot k$ expected communication. Then the probability of ``the communication cost $\geq 200 C \cdot k$'' is 
less than or equal to $1/200$ by the Markov's inequality. We create the desired protocol by Alice and Bob aborting communication when its cost gets $200 C \cdot k$.
This modified protocol still have the success probability $\geq 99/100$, 
since the first modification has the error $1/200$ and the second modification affects the error at most $1/200$.

\end{document}